\DeclareMathAlphabet\rsfscr{U}{rsfso}{m}{n}
\def \O     	{\mathcal{O}}
\def \Z 	{\mathbb{Z}}
\def \PRIMES 	{\mathbb{P}}
\def \PoSW      {\mathsf{PoSW}}
\def \CP        {\mathtt{CP}}
\def \H    	{\mathsf{H}_x}
\def \RO	{\mathsf{H}}
\def \adv       {\mathcal{A}}
\def \solve	{\mathsf{Solve^H}}
\def \open	{\mathsf{Open^H}}
\def \gen	{\mathsf{Gen}}
\def \vdf	{\mathsf{VDF}}
\def \verify	{\mathsf{Verify}}
\def \verifyH	{\mathsf{Verify^H}}
\def \parent	{\mathsf{parent}}
\def \negl	{\mathtt{negl}}
\def \poly	{\mathtt{poly}}
\def \st	{\mathtt{state}}
\def \mod       {\; \mathbf{mod} \;}
\def \multgroup#1{(\mathbb{Z}/#1\mathbb{Z})^\times}
\def \vrf       {\mathcal{V}}
\def \prv       {\mathcal{P}}
\def \rfg       {\mathcal{G}^\frac{1}{\ell}}
\begin{document}
\title{Single-Query Verifiable Proof-of-Sequential-Work}

\titlerunning{Single-Query Verifiable $\PoSW$}

% If the paper title is too long for the running head, you can set
% an abbreviated paper title here
% 
\author{Souvik Sur}
% 
% \authorrunning{Sur and Roychowdhury}
% % First names are abbreviated in the running head.
% % If there are more than two authors, 'et al.' is used.
% 
\institute{Department of Computer Science and Engineering,\\ 
Indian Institute of Technology Kharagpur,\\ Kharagpur, West Bengal, India \\
\email{souviksur@iitkgp.ac.in}}

 % \author{}\institute{}
\maketitle              % typeset the header of the contribution

\begin{abstract}\label{abstract}
We propose a proof-of-sequential-work ($\PoSW$) that can be verified with only a single query 
to the random oracle for each random challenge.
Proofs-of-sequential-work are protocols that facilitate 
a verifier to efficiently verify if a prover has executed a specified number 
of computations sequentially.
Denoting this number of sequential computations with $N$, 
the prover with $\poly(N)$ parallelism must take $\Omega(N)$-sequential time 
while the verifier verifies the computation in $\O(\log N)$-sequential time using upto $\O(\log N)$ parallelism.
We propose a $\PoSW$ that allows any verifier, even the one 
with no parallelism, to verify using just a single sequential computation on a single challenge.
% The prover commits its output using some commitment scheme specified in the protocol. The verifier 
% obviously avoids re-computation of the output in order to verify. Rather it asks the prover to provide a set of 
% proofs against a set of challenges chosen uniformly at random and validates the commitment using each of these
% proof and challenge pairs along-with the input.

% Another crucial criteria 
% for $\PoSW$, known as soundness, is that a prover even with enough parallelism 
% should not be able to compute the proof in steps much less than the specified.
% A recent construction by x et al.~\cite{Dottling2019Incremental} adds up 
% another dimension to $\PoSW$ s namely incremental $\PoSW$ ($i$$\PoSW$). Essentially, $i$$\PoSW$ s facilitates
% prover to hand-over the task among multiple provers, however, only sequentially.   
% Since its introduction in, 
% all the $\PoSW$s~\cite{Cohen2018Simple,Abusalah2019Reversible} 
% including the recent one~\cite{Dottling2019Incremental}
% use random oracle as a blackbox in their constructions. 
% They take random oracle as the only source of 
% sequentiality which is measured by the number of rounds of queries to the oracle. 

All the existing $\PoSW$s~\cite{Mahmoody2013Sequential,Cohen2018Simple,Abusalah2019Reversible,Dottling2019Incremental} 
mandate a prover to compute a sequence of responses 
from a random oracle against $N$-rounds of queries.
Then the prover commits this sequence using a commitment scheme 
(e.g., Merkle root (like) commitment) predefined in the $\PoSW$s.
Now the verifier asks the prover to provide a set of proofs against 
$t$ randomly chosen checkpoints, called challenges, in the computed sequence.
The verifier finds out the commitment from each of these proofs spending $\O(\log N)$ rounds of queries to the oracle. 
% Thus all the existing $\PoSW$s need $\O(t\log N)$ queries in total while verifying the proofs.
It can be reduced to a single round of queries 
only if the verifier owns $\O(\log N)$ parallelism~\cite{Dottling2019Incremental}.

The verifier in our $\PoSW$ demands \emph{no} parallelism but uses a \emph{single} query to the random oracle in order 
to verify each of the $t$ challenges.
% We give the construction of such a $\PoSW$.
% The proposed $\PoSW$ improves this proof-size to $\O(ts)$ such that it takes $\ge N$-time to factorize an $s$-bit
% integer into integers i.e, $\log N=\tilde{\O}(s^{1/3})$.
% \footnote{The bound comes from the most efficient heuristic, generalized number field sieve method}
The key observation is that the commitment schemes themselves in the prior works demand $\O(\log N)$ oracle queries to verify. 
So our $\PoSW$ asks the prover to undergo an additional efficient binary operation $\otimes$ on 
the responses from the random oracle against $N$-rounds of queries. 
The cumulative result of $\otimes$, represented as a map $f$, on all such responses serves the purpose of the commitment. 
The verifier verifies this cumulative result with a single query to the oracle exploiting some special properties of $f$.
Thus the prover still needs $\Omega(N)$-rounds of queries to compute the commitment but any (non-parallel) verifier needs 
only a single query to the random oracle to verify. We instantiate $\otimes$ along with $f$ practically.

We stress that the sequentiality of this proposed $\PoSW$ does not depend on the choice of the operation $\otimes$ or 
the map $f$ but proven under the random oracle model. However, its soundness demand some specific properties, 
which are specified in the end, for $\otimes$ and $f$.
\end{abstract}

\keywords{
Proofs of Sequential Work \and
Soundness \and
Sequentiality \and
Modulo exponentiation \and
Random Oracle}

\section{Introduction}\label{introduction}
The notion of proofs of sequential work ($\PoSW$) was introduced by 
Mahmoody et al. in~\cite{Mahmoody2013Sequential}. A $\PoSW$ is a protocol 
that enables a verifier $\vrf$ to efficiently check if a prover $\prv$ 
has gone through $N$ sequential steps after receiving some statement $x$.
Upon receiving $x$, $\prv$ computes some sequence spending at least 
$N$ sequential steps (i.e. time) and commits the sequence to some commitment $\phi$ using a 
specified commitment scheme. Observing the $\phi$, $\mathcal{V}$ challenges 
$\mathcal{P}$ to supply $t$ number of proofs $\pi_i$ of $\mathcal{V}$'s choice. $\mathcal{V}$ verifies 
the integrity of $\phi$ through each of the proofs $\pi_i$. If all of them are correct then $\mathcal{V}$
accepts that $\mathcal{P}$ has spent $N$ sequential steps on the input $x$; rejects otherwise. 
In order to verify efficiently, 
$\mathcal{V}$ keeps $t$ as small as possible but sufficient to catch an adversary $\adv$ 
intending to skip a fraction (say $\alpha$) of $N$ with non-negligible probability.

This fraction $\alpha$ and the sufficiency of the commitment largeness of $t$ ties an important property with 
every $\PoSW$. It is known as the soundness of a $\PoSW$. Soundness demands a $\PoSW$ 
to guarantee that no adversary $\adv$ making only $(1-\alpha)$ fraction of 
$N$ sequential steps would be accepted with the probability $>(1-\alpha)^t$. 
% On the contrary, a $\PoSW$ 
% also needs to certify that every honest prover $\mathcal{P}$ spending $N$ sequential steps must be 
% accepted with probability $1$. 
The soundness of all the $\PoSW$s
\cite{Mahmoody2013Sequential,Cohen2018Simple,Abusalah2019Reversible,Dottling2019Incremental} 
are proven on the assumption that given a string $x\in\{0,1\}^*$ and a random oracle $\mathsf{H}$, no adversary $\adv^\mathsf{H}$ making only $(1-\alpha)N$ queries to $\mathsf{H}$, can compute a sequence of responses 
$\mathsf{H}^i(x)= \mathsf{H}(\mathsf{H}^{(i-1)}(x))$ for $i=1,2,\ldots,N$ with the probability 
$>(1-\alpha)^t$. 

The verification, in all these $\PoSW$s, need at least $\log N$ queries to the random oracle for 
the time parameter $N$. D\"{o}ttling et al. reduces it to a single round of query only if the verifier has 
$\O(\log N)$ parallelism~\cite{Dottling2019Incremental}. Our $\PoSW$ reduces it to a single query to the random oracle even if 
the verifier has no parallelism.

% We call it as the correctness property. In the current context,
% we will mostly deal with soundness showing that our scheme achieves a significantly 
% better soundness than that of the existing $\PoSW$ s.

\subsection{Organization of the Paper}
Section~\ref{literature} discusses all the existing $\PoSW$s. 
In Section~\ref{preliminaries}, we describe a succinct review of
the technicalities $\PoSW$ and random oracle.
Section~\ref{practical} demonstrates the design of our single-query verifiable $\PoSW$.
In Section~\ref{security} we analyze the security of the $\PoSW$.
We compare the efficiency of the proposed $\PoSW$ and the existing ones in Section.~\ref{efficiency}. 
We generalize the map $f$ in Section.~\ref{design} in order to explore the other possibilities 
to have such a $\PoSW$.
Finally, Section~\ref{conclusion} concludes the paper.
% after highlighting an interesting open problem in this context.

\section{Related Work}\label{literature}
With its introduction, the first $\PoSW$ by Mahmoody et al. asks a prover to compute the labels 
against all the vertices of a directed depth robust graph of $N$ nodes using a random oracle~\cite{Mahmoody2013Sequential}.
The label against a vertex requires to be computed recursively from the labels of all of its parents. 
Now the prover sends a Merkle root commitment of all the labels to the verifier.
The verifier challenges the prover on some of these labels. 
Given such a challenge node, the verifier needs to provide the labels of the challenge node, 
its parents, and the siblings of the nodes that lie over the path from the challenge node to the Merkle root, as a proof $\pi$.
The verifier finds the label of the node using its parents' labels. Then (s)he reconstructs the Merkle root using 
the labels of the siblings in $\O(\log N)$-time. 
% For a depth robust graph having $\poly(\log N)$ degree 
% each proof $\pi$ can be verified in $\poly(\log N)$-time.

The soundness of this $\PoSW$ is based on the property that
an $(\alpha,(1-\alpha))$-depth-robust graph always has a path of length of $(1-\alpha) N$ even after removing 
$\alpha N$ of vertices where $\alpha < 1$. 
So a prover has to evaluate the labels along a path of length $(1-\alpha) N$ spending $(1-\alpha) N$ sequential time. 
%  However, it requires $\mathcal{O}(N)$ space too questioning the practicality of the scheme. 

Cohen and Pietrzak propose another $\PoSW$ using a directed binary tree with some additional edges~\cite{Cohen2018Simple}. 
Each of these additional edges ends at each of the leaves of the starting from the left 
sibling of the nodes on the paths from the leaves to the root.
Essentially the idea is similar to the one in~\cite{Mahmoody2013Sequential}, however, they use the Merkle tree
not only for verification but also to guarantee the $N$ sequential computations. Moreover,
the labels in the graph can be computed in topological order with the help 
of only $\mathcal{O}(\log N)$ labels. Thus a prover spends $N$-sequential time to label the tree.
The effort for verification is $\O(\log N)$-time via Merkle root verification as mentioned above.
We discuss this work in more detail in Sect.~\ref{CP} as it is at the heart 
of our $\PoSW$. 
% Basically they store the labels of $u$ highest levels 
% of the Merkle tree which leads to $(N/2^{u})-1$ computations for rest of the labels.

In the recent past, Abusalah et al. designed a reversible $\PoSW$ in~\cite{Abusalah2019Reversible} 
with the help of skip list graph where for each edge $(i,j)$ there exists a $k>0$ such that $j-i=2^k$ and $2^k \mid i$. 
The protocol asks the prover to label a skip list of $N$ nodes using a random permutation oracle.
The verifier selects some challenge nodes uniformly at random and checks if 
the paths containing the challenge nodes are consistent.
It has been shown that random oracle and random permutation oracle are indistinguishable~\cite{Impagliazzo1989Limits}.
So the prover needs $N$-sequential time to label the skip list. The verifier exploits the typical property of a skip list 
to verify in $\O(\log N)$-time.

% As this strategy to verify yields the same amount of effort required by the 
% prover, the authors embed sloth~\cite{Lenstra2017Sloth} functions in between the permutations.
% A sloth can be made to verify $1000$ times faster than enumerating the proof. The reversible 
% $\PoSW$ utilize this aspect of sloth to stand as a practical scheme.

At the same time, incremental proof-of-sequential-work was introduced by D\"{o}ttling et al.~\cite{Dottling2019Incremental}.
This additional feature allows a prover to continue the computation from some earlier checkpoints. 
Their construction is also based on the $\PoSW$ in~\cite{Cohen2018Simple}. They 
make it incremental by choosing the challenge leaves dynamically while labeling the graph. So the rest 
of the graph can be pruned gradually in the run-time. 
% Essentially they choose a node such that there are enough leaves 
% under the node to accommodate the random challenges prescribed by the protocol. 
In order to determine the challenge leaves under a node it randomly chooses the set of leaves 
from both of its subtrees using another random oracle on the label of that node. 
So only the challenge paths are required to be stored for verification. 
The effort verification is $\O(\log N)$ as the challenge paths are of length $\O(\log N)$ at most.

The verification in all of the above $\PoSW$s can be parallelized upto the availability
of $\O(\log N)$-parallelism.

% The growing interest of the researchers in this 
% domain has presently got focused to a special class of $\PoSW$, 
% known as verifiable delay function (VDF)~\cite{Dan2018VDF}. 
% A VDF  is a $\PoSW$ having unique output, however, the reverse is not 
% necessarily true. Recently Mahmoody et al.~\cite{Mahmoody2020RO} show that 
% no perfectly sound VDF and tight $\PoSW$ can be constructed in the random oracle model. 
% Therefore we stress that as our scheme is based on random oracle model,
% we have made no attempt to establish this scheme as a VDF. Also our scheme
% does not turn out to be a tight $\PoSW$.

\subsection{Overview of Our $\PoSW$}
Like~\cite{Dottling2019Incremental}, our $\PoSW$ is based on 
the $\PoSW$ by Cohen and Pietrzak~\cite{Cohen2018Simple}. 
We call the $\PoSW$ in~\cite{Cohen2018Simple} as the $\CP$ construction and modify it as follows.

First we need an \emph{efficient} binary operation $\otimes$ and a binary map $f$ defined over 
the set $\{0,1\}^\lambda$ such that,
\begin{enumerate}
 \item for all $g,a$ and $b$, $f(f(g,a),b)=f(g,(a \otimes b))$.
 \item given the result $f(g,a)$ and $a$, it is computationally hard (w.r.t. $\lambda$) to find $g$. 
 \item the quantity $f(s_0,((s_1 \otimes \ldots \otimes s_N)\otimes s_{i}^{-1}))$ 
 is defined if and only if $s_i \in \{ s_1,s_2,\ldots,s_N \}$.
\end{enumerate}
We instantiate $\otimes$ with the multiplication over integers, $f$ as modulo exponentiation and $s_i$ as 
the $i$-th prime in our $\PoSW$ in Sect.~\ref{practical}. However, we prefer to discuss the fundamental idea using 
$\otimes$ and $f$ showing that the elegance of our $\PoSW$ is independent of this instantiation.

The evaluation phase in our $\PoSW$ works exactly as $\CP$ except that it asks the prover to evaluate 
the function $f$ on the labels (labeled with random oracle) of each node of $\CP$-graph 
% A $\CP$-graph $G=(V,E)$ is a directed binary tree having $\lceil |V|/2 \rceil$ leaves with some additional inward edges 
(discussed in Sect.~\ref{CP}). 
For the time parameter $N$ we need a $\CP$-graph $G_n=(N,E)$ with $N=2^{n+1}-1$ nodes w.l.o.g. for some $n \in \Z$.
The security parameter $\lambda$ determines the random oracle $\mathsf{H} : \{0,1\}^*\rightarrow \{0,1\}^\lambda$.

Given the input $x\in\mathcal{X}$ the prover samples a random oracle 
$\H(\cdot)\stackrel{def}{=}\mathsf{H}(x \| \cdot)$.
Now (s)he computes a $\lambda$-bit label $s_0=\H(0^{n})$.  
% The oracle $\textsf{H}$ is public knowledge and will be used as a subroutine within the the oracle \textsf{G}.
% which will be used as the starting point of our $\PoSW$ computation. 
Then (s)he labels the entire graph $G_n$  
as $s_{i}=\H(i\| s_{k_1}\| \ldots \|s_{k_j})$ 
where the $\{k_j\}$s are the parents of the node $i$.

Along with with the labeling, our $\PoSW$ also asks the prover to compute the product 
$\rho=(s_1\otimes s_2\otimes\cdots\otimes s_{N})$ and the commitment 
$\phi=f(s_0,\rho)=f(\ldots f(f(s_0,s_1),s_2)\ldots s_{N})$. 
The prover is allowed to store any fraction (even fully) of all the labels $\{s_1, s_2,\ldots, s_{N}\}$. 

During verification, the verifier chooses $t$ challenge leaves 
$\{\gamma_1,\gamma_2,\ldots,\gamma_t\}$ of the graph $G_n$, uniformly at random.
For each leaf $\gamma_i$, 
the verifier asks the prover for the proof $\pi_i=(\sigma_i,\tau_i)$ where $\sigma_i=\{s_{k_1}, s_{k_2}, \ldots, s_{k_j}\}$ 
such that $s_{\gamma_i}=\H(\gamma_i\| s_{k_1}\| \ldots \|s_{k_j})$ and $\tau_i=f(s_0,(\rho\otimes s_{\gamma_i}^{-1}))$.
The verifier checks if $s_{\gamma_i}\stackrel{?}{=}\H(\gamma_i\| s_{k_1}\| \ldots \|s_{k_j})$ and if 
$\phi\stackrel{?}{=}f(\tau_i,s_{\gamma_i}^{-1})=f(s_0,(\rho\otimes s_{\gamma_i}^{-1}))$.
The quantity $f(s_0,(\rho\otimes s_{\gamma_i}^{-1}))$ is defined if and only if 
$s_{\gamma_i}$ is one of the labels of the $\CP$ graph. The integrity of the label $s_{\gamma_i}$ is confirmed 
by the \emph{single} query $s_{\gamma_i}\stackrel{?}{=}\H(\gamma_i\| s_{k_1}\| \ldots \|s_{k_j})$.
If both the checks are true for all the challenge leaves $\{\gamma_1,\gamma_2,\ldots,\gamma_t\}$ 
then verifier accepts it, rejects otherwise. 

Sect.~\ref{contributions} presents a detailed comparison 
among our design with the existing ones.

\section{Preliminaries}\label{preliminaries}

We fix the notations first.
\subsection{Notations} 
We take $\mathcal{P}$ and $\mathcal{V}$ as the prover and the verifier respectively.
We denote the security parameter with $\lambda\in\mathbb{Z}^+$ 
and the sequential time parameter $N\in\mathbb{Z}^+$. Here $\poly(\lambda)$ is some function $\lambda^{\O(1)}$, and
$\negl(\lambda$) represents some function $\lambda^{-\omega(1)}$. 

For some $x,z\in\{0,1\}^*$, $x\|z$ implies the concatenation of elements $x$ and $z$.
% The $i$-th bit of $x$ is represented by $x[i]$ and $x[i\ldots j]=x[i]\|\ldots x[j]$.
When $x\in\{0,1\}^*$ is a string then $|x|$ denotes its bitlength.
The alphabets $\mathbb{T},\mathbb{S},\mathbb{R}$ and $\mathbb{U}$ represents sets defined in the context. 
We denote $|\mathbb{S}|$ as the cardinality of $\mathbb{S}$. 
% $\sigma$ always represents a sequence of elements $\langle s_0,s_1,\ldots,s_N \rangle$ having length $N$.
% We refer $[1,N)$ as the half-open interval including $1$ but excluding $N$. 

If any algorithm $\adv$ outputs $y$ on an input $x$, 
we write $y\leftarrow\adv(x)$. By $x\xleftarrow{\$}\mathcal{X}$,
we mean that $x$ is sampled uniformly at random from $\mathcal{X}$.
% We denote the time complexity of any algorithm $\adv$ by $T(\adv)$,
We consider $\adv$ as efficient if it runs in 
probabilistic polynomial time (PPT) in $\lambda$.
We assume (or believe) a problem to be hard if it is yet to have an efficient 
algorithm for that problem.
We denote $\mathsf{H}:\{0,1\}^*\rightarrow\{0,1\}^w$ as a random oracle.  
If an algorithm $\adv$ queries the random oracle $\mathsf{H}$ it is denoted 
as $\adv^\mathsf{H}$.

\subsection{Random Oracle}
\begin{definition}{\bf (Random Oracle $\mathsf{H}$).}\label{RO}
A random oracle \textsf{H}:$\{0,1\}^*\rightarrow\{0,1\}^w$ is a map that always 
maps any element from its domain to a fixed element chosen uniform at random from 
its range.
\end{definition}
% 
% \begin{definition}{\bf (Random Permutation Oracle $\mathsf{\Pi}$).}\label{RPO}
% A random permutation oracle $\mathsf{\Pi}:\{0,1\}^w\rightarrow\{0,1\}^w$ is a random oracle 
% which is bijective.
% \end{definition}

% The following two lemmas are already proved in~\cite{Cohen2018Simple}. 
% We mention them here for the sake of completeness.
% \begin{lemma}{\bf (\textsf{H} is Collision Resistant).}\label{collision}
%  With at most $q$ queries to {\normalfont\textsf{H}}, the probability of having two colliding queries 
%  $x\ne x'$ and ${\normalfont \textsf{H}}(x)={\normalfont \textsf{H}}(x')$ is at most $\frac{q^2}{2^{w+1}}$.
% \end{lemma}
% \begin{proof}
%  The probability of colliding the output of the $i$-th query with
%    any of the $(i-1)$ previous outputs is at most $(i-1)/2^w$ . By the union bound, 
%    the probability of collisions up to any of the $i$-th query is at most 
%    $\sum\limits_{i=1}^{q}\frac{i-1}{2^w}<\frac{q^2}{2^{w+1}}$.  
% \end{proof}

\begin{definition}{\bf ($\RO$-sequence).}\label{sequence}
 An $\RO$-sequence of length $\mu$ is a sequence
$x_0, x_1,\ldots,x_\mu\in\{0,1\}^*$ where for each $i$, $1 \le i < \mu, 
{\normalfont \textsf{H}}(x_i)$ is contained in $x_{i+1}$
as continuous substring, i.e., $x_{i+1} = a||{\normalfont \textsf{H}}(x_i)||b$ for some $a,b\in\{0,1\}^*$.
\end{definition}

We mention the following theorem from~\cite{Cohen2018Simple} for the sake of completeness.
\begin{lemma}{\bf ($\mathsf{H}$ is Sequential).}\label{thm:Sequential}
 With at most $(N-1)$ rounds of queries to {\normalfont\textsf{H}}, 
 where in each round one can make arbitrary many parallel queries. 
 If $\mathsf{H}$ is queried with at most $\mu$ queries of total length $Q$ bits, then
 the probability that {\normalfont\textsf{H}} outputs an $\mathsf{H}$-sequence 
 $x_0,\ldots,x_N\in\{0,1\}^*$ is at most $$\mu .\frac{Q+\sum\limits_{i=0}^{N}|x_i|}{2^{w}}$$.
\end{lemma}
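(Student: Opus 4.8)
The plan is to bound the probability that an adversary, using only $N-1$ rounds of adaptive queries to $\mathsf{H}$ (with unbounded parallelism within a round), manages to output a valid $\mathsf{H}$-sequence $x_0,\dots,x_N$. The key observation is structural: in an $\mathsf{H}$-sequence, $\mathsf{H}(x_i)$ must literally appear as a contiguous substring of $x_{i+1}$. Since each $x_{i+1}$ itself serves as (part of) a query whose answer $\mathsf{H}(x_{i+1})$ must in turn appear in $x_{i+2}$, intuitively the adversary must ``discover'' the value $\mathsf{H}(x_i)$ before it can even formulate the query $x_{i+1}$. So a genuine $\mathsf{H}$-sequence of length $N$ forces a chain of $N$ dependent queries, which cannot be compressed into $N-1$ rounds unless the adversary gets lucky and guesses an oracle output without querying it.

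First I would set up the event bookkeeping. For each round $j \in \{1,\dots,N-1\}$, let $\mathcal{Q}_j$ be the (multi)set of queries issued in that round, and consider the transcript of all query/answer pairs. I would call an $\mathsf{H}$-sequence $x_0,\dots,x_N$ \emph{fresh-lucky} if there exists some index $i$ such that the substring $\mathsf{H}(x_i)$ occurs inside $x_{i+1}$ \emph{before} the query $x_i$ has been issued to $\mathsf{H}$ (in the ordering of rounds) --- i.e., the adversary wrote down the correct $w$-bit block without having queried for it. The core combinatorial claim is the contrapositive: if no such lucky coincidence happens, then the queries $x_0, x_1, \dots, x_{N-1}$ must be issued in strictly increasing rounds (query $x_i$ strictly before query $x_{i+1}$), which needs $N$ distinct rounds and contradicts the $N-1$ bound. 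This is where the ``$\mathsf{H}(x_i)$ contained in $x_{i+1}$'' substring condition does the real work: you cannot have named the string $x_{i+1}$ as a query until its $\mathsf{H}(x_i)$-block is known, and absent luck that block becomes known only upon answering query $x_i$.

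Next I would quantify the luck probability via a union bound. Fix a position within some $x_{i+1}$ at which a $w$-bit block could align with a potential future oracle value, and fix the query string $x_i$ that it is supposed to match. Because $\mathsf{H}$ is a random oracle, conditioned on the event that $x_i$ has not yet been queried, the value $\mathsf{H}(x_i)$ is uniform in $\{0,1\}^w$, so the probability that a fixed $w$-bit block equals it is $2^{-w}$. I would then count the number of (query string, alignment position) pairs: there are at most $\mu$ distinct query strings $x_i$ that could play the role of the ``source'' of a lucky block, and for each, the number of candidate starting positions for the block ranges over the positions inside the strings $x_{i+1}$; summing the lengths of all the $x_i$'s and also the total query length $Q$ accounts for where such a block could sit, giving the factor $Q + \sum_{i=0}^{N}|x_i|$ (the $|x_i|$ terms cover the positions inside the sequence strings themselves, the $Q$ term covers positions inside other queries that might double as sequence elements). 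Multiplying the per-pair probability $2^{-w}$ by $\mu$ sources and by $(Q + \sum_i |x_i|)$ positions yields the stated bound $\mu \cdot \frac{Q + \sum_{i=0}^{N}|x_i|}{2^w}$.

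The main obstacle I anticipate is making the ``cannot compress $N$ dependent queries into $N-1$ rounds'' argument fully rigorous in the presence of parallelism and adaptivity: one has to argue carefully that within a single round, even though many queries go out at once, the answers from that round are only available in the \emph{next} round, so a genuine dependency edge $x_i \to x_{i+1}$ (where $x_{i+1}$ provably depends on $\mathsf{H}(x_i)$) must cross a round boundary, and hence the $N$ edges of the chain force $N$ round boundaries. The delicate point is ruling out ``shortcuts'' where the adversary anticipates $\mathsf{H}(x_i)$ from partial information or from collisions among queries --- this is precisely what the freshness/luck event is designed to absorb, and I would need to verify that the only way to avoid a crossing-a-boundary argument is to trigger that low-probability event. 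I would lean on the random-oracle structure (independence of unqueried points) to close this gap, and treat the counting of alignment positions as the routine part to be filled in.
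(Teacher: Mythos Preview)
Your proposal is correct and follows essentially the same approach as the paper. The paper's proof is terser and explicitly splits the bad event into two named cases---``Random Guess'' (some $x_i$ in the output sequence was never queried, contributing the $\sum_i |x_i|$ term) and ``Collision'' (all $x_i$ were queried but some $\mathsf{H}(a_j)$ from a later round already appears as a substring of an earlier-round query, contributing the $Q$ term)---whereas you unify both into a single ``fresh-lucky'' event and recover the same two contributions in your position-counting; the underlying pigeonhole-on-rounds argument and the union bound over $w$-bit alignments against unqueried oracle values are identical.
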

\begin{proof}
 There are two ways to figure out an $\mathsf{H}$-sequence 
 $x_0,\ldots,x_\mu\in\{0,1\}^*$ with only $(N-1)$ sequential queries.
 \begin{description}
  \item [{\normalfont Random Guess:}] As $\mathsf{H}$ is uniform, for some $a,b\in\{0,1\}^*$ and some 
  $i$, if the query$\mathsf{H}(x_i)$ has not been made then, 
  $$\Pr[x_{i+1} = a||\mathsf{H}(x_i)||b \;]\le \mu .\frac{\sum\limits_{i=0}^{N}|x_i|}{2^{w}}$$
  \item [{\normalfont Collision:}] If $x_i$'s were not computed sequentially then, 
for some $1 \le i \le j \le N - 1$, a query $a_i$ is made in round $i$ and
query $a_j$ in round $j$ where $\mathsf{H}(a_j)$ is a sub-element of $a_i$.
As {\normalfont \textsf{H}} is uniform, the probability of this event $\le \mu .\frac{Q}{2^w}$.
 \end{description} 
  
\end{proof}

\subsection{Proof of Sequential Work}\label{protocol}
Mahmoody et al. are the first to formalize the idea of $\PoSW$s in~\cite{Mahmoody2013Sequential}.
All the existing $\PoSW$s are defined in the random oracle model as 
they inherit their sequentiality from that of the random oracle.
However, it is not necessary as there exist other sources of sequentiality e.g.,
time-lock or RSW puzzle~\cite{Rivest1996Time}. Therefore, we define $\PoSW$ in general,
irrespective of the random oracle model. Later, we translate the
same definition in the random oracle model in Sect.~\ref{PoSWRO}. 

\begin{definition}\normalfont{ \bf (Proof of Sequential Work $\PoSW$).}
Assuming $\mathcal{X},\mathcal{Y}\subseteq\{0,1\}^*$, a $\PoSW$ is a quadruple of algorithms 
$\mathsf{Gen, Solve, Open, Verify}$ 
that implements a mapping $\mathcal{X}\rightarrow\mathcal{Y}$ as follows,
\begin{description}
\item $\mathsf{Gen}(1^\lambda,N) \rightarrow \mathbf{pp}$
 is an algorithm that takes as input a security parameter $\lambda$ and time parameter $N$
 and produces the public parameters $\mathbf{pp}$. These parameters $\mathbf{pp}$ are implicit 
 in each of the remaining algorithms.
 
 \item $\mathsf{Solve}(\mathbf{pp},x)\rightarrow(\phi,\phi_\mathcal{P})$ 
 takes an input $x\in\mathcal{X}$ usually called statement, and produces a commitment $\phi\in\mathcal{Y}$. 
 The triple $(x,N,\phi)$, often called commitment, is publicly announced by $\mathcal{P}$. 
 Additionally $\mathcal{P}$ may produce some extra information $\phi_\mathcal{P}\in\{0,1\}^*$ and stores it locally 
 in order to use in the $\mathsf{Open}$ algorithm. 
 The running time of $\mathsf{Solve}$ must be at least $N$.
 
 \begin{description}
 \item[Challenge Vector]
 Observing an announced triple $(x,N,\phi)$, $\mathcal{V}$ 
 samples a challenge vector $\gamma=\{\gamma_1,\gamma_2,\ldots,\gamma_t\}\in_R\mathbb{Z}^t_N$ uniformly at random.
 \end{description}
 \item $\mathsf{Open}(\mathbf{pp},x,N,\phi,\phi_\mathcal{P},\gamma)\rightarrow\pi$ 
 takes the challenge vector $\gamma$ and the locally stored information $\phi_\mathcal{P}$ as the inputs, 
 and sends a proof vector $\pi=\{\pi_1,\pi_2,\ldots,\pi_t\}\in\{0,1\}^*$ to $\mathcal{V}$. Essentially, $\mathcal{P}$
 runs $\mathsf{Open}$ to generate each $\pi_i$ corresponds to each $\gamma_i$.

 \item $\mathsf{Verify}(\mathbf{pp}, \pi,x,N,\gamma,\phi)\rightarrow \{0, 1\}$ is an
 algorithm that takes a triple $(x,N,\phi)$, a challenge vector $\gamma$, a proof vector $\pi$
 and either accepts ($1$) or rejects ($0$). We say the commitment triple is a valid one if and only if $\mathsf{Verify}$
 accepts it. The algorithm must be ``exponentially" faster than 
 $\mathsf{Solve}$, in particular, must run in $\poly(\lambda,\log{N})$ time.
\end{description}
\end{definition}

% \begin{description}
%   
%   
%   
%   \item[Remark 2:] A $\PoSW$ is called an incremental $\PoSW$ ($\PoSW$) only if it allows to compute the triple $(x,N,\phi)$ 
%   from another triple $\{x,N',\phi'\}$ in time $N''=N-N'$ using an additional algorithm $\mathsf{Inc}$.  
%   The algorithm $\mathsf{Inc}(\mathbf{pp},x,N',N'',\phi')\rightarrow(\phi,\phi_\mathcal{P})$ 
%  takes a triple $\{x,N',\phi'\}$ such that $\mathsf{Verify}(\mathbf{pp}, \pi',x,N',\gamma',\phi')=1$, 
%  then it outputs another commitment $(x,N,\phi)$ in $N''$ time.
%  Therefore, a $\mathsf{Solve}(\mathbf{pp},x,N)\rightarrow(\phi,\phi_\mathcal{P})$ is 
%  equivalent to a $\mathsf{Solve}(\mathbf{pp},x,N')\rightarrow(\phi',\phi_\mathcal{P}')$ 
%  followed by an $\mathsf{Inc}(\mathbf{pp},x,N',N'',\phi')\rightarrow(\phi,\phi_\mathcal{P})$. 
%  This is why an $\PoSW$ places the time parameter $N$ in the $\mathsf{Solve}$ 
%  rather than in $\mathsf{Gen}$ as the public parameters $\mathbf{pp}$ need to be independent of $N$.
%  \end{description}

Before we proceed to the security of a $\PoSW$ we precisely model parallel adversaries that 
suit the context. 
\begin{definition}\normalfont{(\bf Parallel Adversary)}\label{adversaries} 
A parallel adversary $\adv=(\adv_0,\adv_1)$ is a pair of non-uniform 
randomized algorithms $\adv_0$ with total running time $\poly(\lambda,N)$, 
and $\adv_1$ which runs in parallel time $\delta<N-o(N)$ on at 
most $\poly(\lambda,N)$ number of processors.
\end{definition}
Here, $\adv_0$ is a preprocessing algorithm that precomputes some
$\st$ based only on the public parameters, and $\adv_1$ exploits
this additional knowledge to solve in parallel running time $\delta$ on 
$\poly(\lambda,N)$ processors.

The three necessary properties of a $\PoSW$ are now introduced.

\begin{definition}\normalfont{(\bf Correctness)}\label{def: Correctness} 
A $\PoSW$ is correct, if for all $n,N,\mathbf{pp}$, 
and $x\in\mathcal{X}$, we have
\[
\Pr\left[
\begin{array}{l}
\mathsf{Verify}(\mathbf{pp},\pi,x,\gamma,\phi)=1
\end{array}
\Biggm| \begin{array}{l}
\mathbf{pp}\leftarrow\mathsf{Gen}(1^\lambda,N)\\
x\xleftarrow{\$}\mathcal{X}\\
(\phi,\phi_\mathcal{P})\leftarrow\textsf{Solve}(\mathbf{pp},x)\\
% \lor\qquad\quad\quad\textsf{Inc}^{\mathrm H}(\mathbf{pp},x,N',N'',\phi')\\
\pi\leftarrow\textsf{Open}(\mathbf{pp},x,\phi_\mathcal{P},\gamma)
\end{array}
\right]
=1.
\]
$\mathcal{V}$ always accept a triple $(x,N,\phi)$ generated by $N$ sequential queries to \textsf{H}.
\end{definition}

\begin{definition}\normalfont{\bf(Soundness)}\label{def: Soundness} 
A $\PoSW$ is sound if for all non-uniform parallel algorithms $\adv$ 
(Def.~\ref{adversaries}) that run in $(1-\alpha)N$ 
time, for some $0 < \alpha <1$, we have

\[
\Pr\left[
\begin{array}{l}
\phi\ne\mathsf{Solve}(\mathbf{pp},x)\\
\mathsf{Verify}(\mathbf{pp},\pi,x,\gamma,\phi)=1
\end{array}
\Biggm| \begin{array}{l}
\mathbf{pp}\leftarrow\mathsf{Gen}(1^\lambda,N)\\
\st\leftarrow\adv_0(1^\lambda,N,\mathbf{pp})\\
x\xleftarrow{\$}\mathcal{X}\\
(\phi,\pi)\leftarrow\adv_1(\st,x)\\
\end{array}
\right]\\
\le (1-\alpha)^t.
\]
Using $t$ number of random challenges $\gamma$, the verifier $\mathcal{V}$ should catch
all non-uniform parallel adversaries $\adv$ with ``non-negligible" probability.  
\end{definition}

\begin{definition}\normalfont{\bf (Sequentiality)}\label{def: Sequentiality}
A $\PoSW$ is $\delta$-sequential if for all parallel algorithms $\adv=(\adv_0,\adv_1)$ 
(Def.~\ref{adversaries}) that finds a $\phi$ in parallel time $\delta(N)< N$, it holds that
\[
\Pr\left[
\begin{array}{l}
\phi=\mathsf{Solve}(\mathbf{pp},x)
\end{array}
\Biggm| \begin{array}{l}
\mathbf{pp}\leftarrow\mathsf{Gen}(1^\lambda,N)\\
\st\leftarrow\adv_0(1^\lambda,N,\mathbf{pp})\\
x\xleftarrow{\$}\mathcal{X}\\
\phi\leftarrow\adv_1(\st,x)
\end{array}
\right]
\le \negl(\lambda).
\]
\end{definition}
% It is expected to achieve the sequentiality $\delta=N-o(N)$ by a $\PoSW$.
% Even a sequentiality $\delta=N-kN$ for small $k<1$ is 
% sufficient for most applications. 

\begin{description}

\item[Non-interactive $\PoSW$s] The $\mathsf{Open}$ phase is required only in the interactive version of a $\PoSW$. 
  It can be made non-interactive using another hash function 
  $\H':\{0,1\}^*\rightarrow\mathbb{Z}_N$ as per the Fiat-Shamir heuristic. In that case, $\mathcal{P}$ 
  will compute $$\gamma=\{\H'(\phi\|1), \H'(\phi\|2),\ldots,\H'(\phi\|t)\}.$$

 \item[Subexponentiality of Time $N$] \label{subexp}
 An adversary $\adv$ running on $\poly(\lambda,2^{\O(\lambda)})$ processors
will always be able to efficiently find a valid commitment $(x,N,\phi)$ for any $N\in 2^{\O(\lambda)}$.
The trick is to brute-force the proof space using $\mathsf{Verify}$ which is efficient.
Given a statement $x$ and target time $N$, $\adv$ does not need to run $\mathsf{Solve}$ 
rather (s)he will choose a $\phi\in_R \mathcal{Y}$ uniformly at random.
Now $\mathcal{V}$ will sample a challenge vector $\gamma$ for which $\adv$ needs to find a proof vector $\pi$
such that, $\mathsf{Verify}(\mathbf{pp},\pi,x,\gamma,\phi)=1$.
For each $\gamma_i$, $\adv$ will run $2^{\O(\lambda)}$ instances of $\mathsf{Verify}$ each with a different $\pi'$ 
on each of its processors and identify the correct $\pi'$ with $\mathsf{Verify}(\mathbf{pp},\pi',x,\gamma,\phi)=1$.
So $\PoSW$s restrict $N \in 2^{o(\lambda)}$ enforcing the complexity of this brute-force approach 
to be $2^\lambda/2^{o(\lambda)}=2^{\Omega(\lambda)}$.

\item[$\PoSW$ in the Random Oracle Model]\label{PoSWRO}
$\PoSW$s are not necessarily defined in the random oracle model, however traditionally all the existing $\PoSW$s have been 
so~\cite{Mahmoody2013Sequential,Cohen2018Simple,Abusalah2019Reversible,Dottling2019Incremental}.
Essentially in the random oracle model both the prover $\mathcal{P}$, the verifier $\mathcal{V}$ and also 
the parallel adversary $\adv$ are allowed to access a common random oracle $\mathsf{H}$. So we write 
$\mathsf{PoSW^H=\{Gen, Solve^H, Open^H, Verify^H}\}$ 
to emphasize that all these algorithms except $\mathsf{Gen}$ may access 
the random oracle $\mathsf{H}$. The input statement $x$ samples a random oracle $\H$ 
where $$\H(\cdot)\stackrel{def}{=}\mathsf{H}(x \| \cdot).$$
% The algorithm $\mathsf{Gen}$ usually initializes $\mathsf{H}$. 
The notion of sequentiality in these $\PoSW$s comes from the fact that $\mathsf{Solve^H}$ 
requires $\H$-sequence of length $N$ to compute the commitment $(x,N,\phi)$. By lemma~\ref{thm:Sequential}, 
no parallel adversary $\adv$ running on $\poly(\lambda,N)$ processors, can produce a $(x,N,\phi')$ in 
time $<N$ such that $\Pr[\phi=\phi']>\negl(\lambda)$. 
On the contrary, $\mathsf{Verify^H}$ uses only $\poly(\lambda,\log N)$ queries to \textsf{H}. So we call such a 
$\PoSW$ as a proof-of-sequential-work in the random oracle model.

\end{description}

% In this paper we will propose a $\PoSW$ using a special type of random (permutation\lambda) oracle 
% defined in the next section.

\subsection{The $\CP$ $\PoSW$ }\label{CP}
It has two parts.
\subsubsection{The $\CP$ Graph}
Suppose $N=2^{n+1}-1$ and $B_n=(V,E')$ is a complete binary tree of depth $n$ 
with edges pointing to the root from the leaves. So the set $V$ can be identified with $\{0,1\}^{\le n}$
binary strings of length $\le n$, identifying root with the null string $\epsilon$. 
As the edges point upward each of the internal nodes have $2$ parents, left and right. 
The index of the left and right parents of a node $v$ are $v\|0$ and $v\|1$, respectively. 
Therefore essentially,
$$E'=\{(v\|0,v)\cup (v\|1,v) \mid v \in  \{0,1\}^{< n}\}.$$
The leaves are identified with $\{0,1\}^{n}$. 
So, the node $w$ lies over the path from a node $u$ to the root if $u=w\|x$ for some $x\in \{0,1\}^{n-|w|}$.

The $\CP$ graph $G_n=(V,E' \cup E'')$ is essentially the graph $B_n$ with the additional edges,
$$E''=\{(v,u)\mid u \in \{0,1\}^n, u=w\|1\|w', v=w\|0 \}.$$

It means an edge $(v,u)\in E''$ if and only if the node $v$ is a left sibling of another node that lies 
over the path from the leaf $u$ to the root. We denote $\parent(v)=\{u \mid (u,v) \in E' \cup E''\}$.

\subsubsection{The $\CP$ Protocol}
Given the input $x\in\mathcal{X}$ the prover samples a random oracle 
$\H(\cdot)\stackrel{def}{=}\mathsf{H}(x \| \cdot)$.
Now (s)he computes a $\lambda$-bit label $s_0=\H(0^{n})$.  
% The oracle $\textsf{H}$ is public knowledge and will be used as a subroutine within the the oracle \textsf{G}.
% which will be used as the starting point of our $\PoSW$ computation. 
Then (s)he labels the entire graph $G_n$  
as $s_{i}=\H(i\| s_{k_1}\| \ldots \|s_{k_j})$ 
where the $\parent(i)=\{k_1, \ldots, k_j\}$. 
The label of the root $\phi$ serves the purpose of the commitment.

The verification exploits two important properties of $G_n$,

\begin{enumerate}
 \item Given a leaf $v$, the labels of $\parent(v)$ are necessary and 
 sufficient to compute the label of the root $\phi$.
 \item For any $0 < \alpha <1$, there exists a path of length $(1-\alpha)N$ in the induced subgraph of $G_n$ having 
 $(1-\alpha)N$ nodes.
\end{enumerate}

The verifier chooses $t$ challenge leaves 
$\{\gamma_1,\gamma_2,\ldots,\gamma_t\}$ of the graph $G_n$, uniformly at random.
For each leaf $\gamma_i$, 
the verifier asks the prover for the proof $\pi_i=\parent(\gamma_i)$.
The verifier finds the label of the root $\phi'$ using the labels of 
the node $\parent(\gamma_i)$ using the first property of $G_n$.
If the committed and the computed labels of the root match i.e., $\phi=\phi'$ then 
the verifier accepts, rejects otherwise. 
The soundness claim comes from the second property of $G_n$. 
An adversary has to label a path of length $(1-\alpha)N$ if 
(s)he attempts to skip $\alpha N$ nodes. 
The sequentiality of this $\PoSW$ stands on the sequentiality of $\mathsf{H}$.
It takes at least $(1-\alpha)N$ sequential time to label a path of length $(1-\alpha)N$.

\section{Single Query Verifiable $\PoSW$}\label{practical}
Here we present our $\PoSW$ that verifies each challenge with only a single query to the random oracle. 
% First we require to describe a number theoretic assumption in order to establish the soundness 
% of this practical instantiation of the protocol. 
% Now we are ready to choose the public parameters $\mathbf{pp}$ as the output of $\gen$ algorithm. 
We denote $\lambda\in\mathbb{Z}$ as the security parameter
and  $N$ as the targeted sequential steps.
The four algorithms for this $\PoSW$ are, 
% Finally we will estimate the bounds for efficiency of both prover and the verifier.

\subsection{The $\gen(1^\lambda,N)$ Algorithm}
The generated public parameters are
$\mathbf{pp}=( \RO,G_n,\times,f,t)$ having the following meanings. 
% Here $f$ has been replaced by an integer $\Delta$ that 
% specifies $f$'s range $\mathbb{T}$ and is not allowed to be set by either $\mathcal{P}$ or $\mathcal{V}$.
\begin{enumerate}

\item $\mathsf{H}:\{0,1\}^*\rightarrow\PRIMES_\lambda$ is a random oracle that maps any
arbitrary binary strings to the set of first $2^\lambda$ primes each denoted as $p_i$.

% \item $\primes : \{0,1\}^\lambda \rightarrow \mathbb{P}$ gives the $i$-th prime when the binary representation 
% of the integer $i$ is supplied as a string to $\primes$. For example, $0110)=37$ 
% which is the $(0110)_2=12$-th prime. 

\item  $G_n$ is a $\CP$-graph having $N=2^{n+1}-1$ nodes (w.l.o.g.).

\item $\times:\mathbb{R}\times\mathbb{R}\rightarrow\mathbb{R}$ represents the
multiplication over the real numbers. We observe that,  
 \begin{enumerate}
 \item $\langle \mathbb{R} , \times \rangle $ forms a group but 
 $\langle \mathbb{Z} , \times \rangle $ forms a monoid. 
%  $1_\mathbb{Z}$ denotes its identity element.
 \item It allows \emph{efficient} computation of,
 \begin{enumerate} \item the product $(a\times b)$ for all $a,b\in\mathbb{R}$.
                 \item the inverse $a^{-1}$ for all $a\in\mathbb{R}$.
                \end{enumerate}
 \item For any subset $\mathcal{S}_k=\{p_0, p_1, \ldots, p_k\}\subseteq\PRIMES^k_\lambda$ the product 
 $((p_0\times \ldots\times p_k)\times p_i^{-1})\in\mathbb{Z}$ if and only if $p_i\in\mathcal{S}_k$.
 \end{enumerate}
\item We define $f:\Z^+\times\Z^+\rightarrow\multgroup{\Delta}$ 
 as $f(g,a)=g^{a}\mod \Delta$ 
 where $\Delta=pq$ is a product of two safe primes that needs $\Omega(N)$-time to be factored.
 The choices for $\Delta$ has been reported in Table~\ref{tab : RSA}.
 We stress that the integer factorization of $\Delta$ is known 
 to neither $\mathcal{P}$ nor $\mathcal{V}$. Secrecy of this factorization is important 
 as $\mathcal{P}$ may violate the soundness of this $\PoSW$ with this knowledge 
 (See Lemma~\ref{factorization}). However, this $\PoSW$ is a public coin because $\mathcal{V}$ uses no 
 secret information. 
 
 The map $f$ requires these three properties. 
\begin{enumerate}
% \item The right identity is $1\in\Z^+$.
\item For all $a$ and $b$, ${((g)^a)}^b\;\mathbf{mod}\; \Delta=g^{(a\times b)}\;\mathbf{mod}\; \Delta$.
\item Given the result $f(g,a)=g^{a}\mod \Delta$ and $a$, finding $g$ is known 
 as the Root Finding Problem (Def.~\ref{lthroot}). This problem is believed to be hard in 
 the multiplicative group $\multgroup{\Delta}$ and thus
 determines the size of $\Delta$ as a function of $N$ (See Table~\ref{tab : RSA}).
\item By the third property of the operation $\times$, 
 the quantity $f(p_0,((p_1\times p_2 \times \ldots\times p_k)\times p_i^{-1})$ is defined 
 if and only if $p_i\in\{p_1, p_2, \ldots, p_k\}$,
 as $f$ operates on the domain $\Z$ only.
 \end{enumerate}

 \item $t$ is the total number of random challenges.
\end{enumerate}
 
 Here we mention two additional properties, specific to this choice for $f(g,a)=g^{a}\mod \Delta$,
 that are not necessary for our $\PoSW$ but come for free.
 \begin{description}
 \item [Discrete Logarithm Problem] Given the result $f(g,a)=g^{a}\;\mathbf{mod}\; \Delta$ and $g$ is another famous problem,
 namely the Discrete Logarithm Problem, which is believed to be hard under the assumption that 
 the integer factorization of $\Delta$ is unknown.
 \item [Time-lock Puzzle] Modulo exponentiation $g^{a}\;\mathbf{mod}\; \Delta$ is believed to be a sequential operation that 
 takes $\Omega(\log_2 a)$-time when the integer factorization of $\Delta$ is unknown. It is renowned as the time-lock or 
 RSW assumption~\cite{Rivest1996Time}.
 As mentioned already, our $\PoSW$ stands sequential even if $f$ is an $\O(1)$-time 
 operation. So we do not count the sequentiality of $f$ in the required sequential
effort. We elaborate on this issue 
 in Sect.~\ref{source}.
 \end{description}

In what follows, right hand sides of all the $\leftarrow$
represent the actual computations and 
that of all the 
$=$ explain their mathematical equivalents. 
\subsection{The $\solve(\mathbf{pp},x)$ Algorithm}
$\prv$ does the following,
\begin{enumerate}
 \item sample a random oracle $\H(\cdot)\stackrel{def}{=}\mathsf{H}(x \| \cdot)$ using $x$.
 \item compute $p_0\leftarrow \H(0^n)$.
 \item initialize $\rho\leftarrow 1$ and $\phi\leftarrow f(p_0,1)=p_0 \mod \Delta$.
\item  repeat for $1 \le i \le N$:
\begin{enumerate}
 \item $p_{i}\leftarrow\H(i\| p_{k_1}\| \ldots \|p_{k_j})$ such that $\parent(i)=\{k_1, \ldots, k_j\}$.
 \item compute $\rho\leftarrow \rho\times p_{i}$.
 \item compute $\phi \leftarrow f(\phi,p_i)=(\phi)^{p_{i}}\mod \Delta$.
\end{enumerate}
\end{enumerate}

Here the $\rho=(p_1 \times p_2 \times \ldots \times p_N)$ and the commitment is 
$\phi=(p_0)^{p_1 \times p_2 \times \ldots \times p_N}\mod \Delta =(p_0)^\rho\mod \Delta$. 

% Now $\phi\ne 1\;\mathbf{mod}\; \Delta$ as $\rho$ does not divide 
% the order of the group $\varphi(\Delta)=(p-1)(q-1)$.

\subsection{The $\open(\mathbf{pp},x,N,\phi,\phi_\mathcal{P},\gamma)$ Algorithm}
In this phase $\vrf$ samples a set of challenge leaves 
$\gamma=\{\gamma_1, \gamma_2,\ldots,\gamma_{t}\}$ where each $\gamma_i\xleftarrow{\$}\Z_N$. 
For each $\gamma_i$, $\mathcal{P}$ needs to construct the proof 
$\pi=\{\sigma_i,\tau_i\}_{1 \le i \le t}$ by the following,

\begin{enumerate}
\item repeat for $1 \le i \le t$:
\begin{enumerate}
 \item assign $\sigma_i \leftarrow  \parent(\gamma_i)$.
 \item $p_{\gamma_i}\leftarrow \H(i\| p_{k_1}\| \ldots \|p_{k_j})$ s.t $\parent(\gamma_i)=\sigma_i=\{k_1, \ldots, k_j\}$.
 \item compute $\tau_i  \leftarrow f(p_0,(\rho \times p_{\gamma_i}^{-1}))=p_0^{ (\rho \times p_{\gamma_i}^{-1})}\mod \Delta$.
 \item assign $\pi_i \leftarrow \{\sigma_i,\tau_i\}$.
\end{enumerate}
\end{enumerate}

$\pi$ is the proof. 
$\prv$ can not compute $\tau_i\leftarrow f(\phi, p_{\gamma_i}^{-1})$ because 
the inverse $p_{\gamma_i}^{-1}\notin\Z$. Also
without the knowledge of $\rho$ it is hard to find a $\tau_i$ 
as it requires to invert $f$ on its first argument (root finding assumption). 
% Also this is why $\prv$ needs to store $\rho$ as the local information $\phi_\prv$.
As mentioned in Sect.~\ref{preliminaries}, in the non-interactive version of the protocol $\mathcal{P}$ 
will compute $\gamma=\{\H'(\phi\|1), \H'(\phi\|2),\ldots,\H'(\phi\|t)\}$.

\subsection{The $\verifyH(\mathbf{pp},x,N,\gamma,\pi,\phi)$ Algorithm}\label{alg : verify}
$\vrf$ runs these steps.
\begin{enumerate}
\item sample a random oracle $\H(\cdot)\stackrel{def}{=}\mathsf{H}(x \| \cdot)$ using $x$.
 \item compute $p_0\leftarrow  \H(0^n).$
\item repeat for $1 \le i \le t$:
\begin{enumerate}
 \item $p_{\gamma_i}\leftarrow \H(i\| p_{k_1}\| \ldots \|p_{k_j})$ s.t $\parent(\gamma_i)=\sigma_i=\{k_1, \ldots, k_j\}$.
 \item set a flag $r_i$ if and only if $\phi=f(\tau_i,p_{\gamma_i})=(\tau_i)^{p_{\gamma_i}}\mod \Delta$.
\end{enumerate}
 \item accept if $(r_1 \land r_2 \land \ldots \land r_t)\stackrel{?}{=}\top$, rejects otherwise.
  \end{enumerate}

% \subsection{The \textsf{Inc\textsuperscript{G}}($\mathbf{pp},x,N',N'',\phi', \rho'$) Algorithm}
% \begin{enumerate}
%  \item Computes $p_0= \mathsf{H}(x)$.
%  \item Initialize $\rho= \rho'$ and $\phi= \phi'$.
%  \item Repeat for $0 \le i \le N-1$:
% \begin{enumerate}
%  \item Computes $p_{i+1} = \mathsf{G}(i||p_i)$.
%  \item Computes $\rho= \rho\times p_{i+1}$.
%  \item Computes $\phi = \phi^{p_{i+1}}\;\mathbf{mod}\; \Delta$.
% \end{enumerate}
% \end{enumerate}

\section{Security of This $\PoSW$}\label{security}
Three security properties of this single query verifiable $\PoSW$ are,
\subsection{Correctness} 
According to Def.~\ref{def: Correctness}, any $\PoSW$ 
should always accept a valid proof.
The following theorem establishes this correctness property of our $\PoSW$ scheme.
\begin{theorem}\label{thm: correctness}
 The proposed {\normalfont $\PoSW$} is correct.
\end{theorem}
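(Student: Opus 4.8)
The plan is a straightforward unwinding: run $\solve$ and $\open$ honestly, feed the resulting transcript into $\verifyH$, and confirm that each flag $r_i$ is set so that the final conjunction evaluates to $\top$. Fix any $\lambda, N, \pp = (\RO, G_n, \times, f, t)$ and any $x \in \X$, and put $\H(\cdot) = \RO(x\|\cdot)$. After $\solve(\pp,x)$ the prover holds $p_0 = \H(0^n)$, the node labels $p_1,\dots,p_N$ with $p_i = \H(i\|p_{k_1}\|\cdots\|p_{k_j})$ and $\parent(i) = \{k_1,\dots,k_j\}$, the product $\rho = p_1 \times p_2 \times \cdots \times p_N$, and the commitment $\phi = f(p_0,\rho) = p_0^{\rho} \bmod \Delta$. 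For a challenge leaf $\gamma_i$, $\open$ outputs $\sigma_i = \parent(\gamma_i)$ and $\tau_i = f(p_0,\, \rho \times p_{\gamma_i}^{-1})$.

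The first thing I would check carefully --- and essentially the only non-cosmetic point --- is that $\tau_i$ is well defined, i.e. that $f$ is applied to an integer. Since $\gamma_i$ is a leaf of $G_n$, its label $p_{\gamma_i}$ appears among the factors of $\rho = p_1 \times \cdots \times p_N$, so $p_{\gamma_i} \mid \rho$ and hence $\rho \times p_{\gamma_i}^{-1} \in \Z^+$; by the third property of $\times$ (equivalently the third required property of $f$) the value $\tau_i = f(p_0,\, \rho \times p_{\gamma_i}^{-1}) = p_0^{\,\rho / p_{\gamma_i}} \bmod \Delta$ is therefore defined. Next, on the verifier's side, $\verifyH$ first recomputes $p_0 = \H(0^n)$, which equals the prover's value because $\H$ is a fixed function of $x$; and for each $i$ the set $\sigma_i = \parent(\gamma_i)$ supplies to the verifier exactly the labels that the prover hashed to obtain $p_{\gamma_i}$, so the verifier's query reproduces the same $p_{\gamma_i}$. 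This step is where the first property of the $\CP$ graph is used --- $\parent(\gamma_i)$ is necessary and sufficient to determine the label of $\gamma_i$ --- except that in our construction a single hash call replaces the $\O(\log N)$-round root reconstruction of $\CP$, which is the whole point.

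It remains to verify the equality tested in step~3(b) of $\verifyH$. Using the homomorphism property $(g^a)^b \equiv g^{a \times b} \pmod{\Delta}$ of $f$,
\[
f(\tau_i, p_{\gamma_i}) = \tau_i^{\,p_{\gamma_i}} \equiv \left(p_0^{\,\rho / p_{\gamma_i}}\right)^{p_{\gamma_i}} \equiv p_0^{\,(\rho / p_{\gamma_i}) \times p_{\gamma_i}} \equiv p_0^{\,\rho} \equiv \phi \pmod{\Delta},
\]
so the flag $r_i$ is set for every $i \in \{1,\dots,t\}$ and $\verifyH$ outputs $1$ with probability $1$, as Definition~\ref{def: Correctness} demands. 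The argument is unchanged in the non-interactive variant, where $\gamma_i = \H'(\phi\|i)$, since correctness does not depend on how the leaves $\gamma_i$ are drawn. I do not expect a real obstacle here; the only thing worth stating explicitly is the divisibility $p_{\gamma_i} \mid \rho$ that keeps $f$ inside its domain $\Z^+ \times \Z^+$.
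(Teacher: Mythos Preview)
Your proof is correct and follows essentially the same route as the paper: verify that $\H$ deterministically fixes $p_0$ and $p_{\gamma_i}$, then use the homomorphism $f(f(g,a),b)=f(g,a\times b)$ to collapse $f(\tau_i,p_{\gamma_i})$ back to $p_0^{\rho}=\phi$ so every flag $r_i$ is set. Your added remark that $p_{\gamma_i}\mid\rho$ keeps $\tau_i$ in the domain of $f$ is a welcome clarification the paper leaves implicit.
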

\begin{proof}
 Since $\mathsf{H}$ always outputs a fixed element,
 $p_0$ is uniquely determined by the input $x$. Moreover in $\verify$, 
 for each $1 \le i < t$,

% In $\verify$ for each $p_{\gamma_i}$,
\begin{align*}
f(\tau_i,p_{\gamma_i})
&= (\tau_i)^{p_{\gamma_i}}\mod \Delta\\
&=p_0^{ (\rho \times p_{\gamma_i}^{-1})\times p_{\gamma_i}}\mod \Delta\\
&=p_0^\rho\mod \Delta\\
&= \phi.
\end{align*}
So, the correct labeling of $G_n$ and evaluation of $\phi$ assigns $r_i=\top$.
 Then for all $i$, $r_i=\top$ results into $r=\top$ in 
 $\verify$. So $\vrf$ has to accept it. 
 \qed
 \end{proof}
 It therefore follows that
\[
\Pr\left[
\begin{array}{l}
\verify(\mathbf{pp},x,\phi,\gamma,\pi)=1
\end{array}
\Biggm| \begin{array}{l}
\mathbf{pp}\leftarrow\gen(1^\lambda,N)\\
x\xleftarrow{\$}\mathcal{X}\\
(\phi,\phi_\mathcal{P})\leftarrow \solve(\mathbf{pp},x)\\
\pi\leftarrow\open(\mathbf{pp},x,\phi,\phi_\mathcal{P},\gamma)
\end{array}
\right]
=1.
\]

\subsection{Soundness}
The soundness warrants that an adversary $\adv$ having even $\poly(\lambda,N)$ processors cannot produce an 
invalid commitment $\phi'$ that convinces the verifier with non-negligible probability. 
We establish the soundness under the adaptive root assumption for
$\multgroup{\Delta}$~\cite{Wesolowski2019Efficient}.

\begin{definition}\normalfont{(\bf $\ell$-th Root Finding Game $\rfg$)}\label{lthroot} 
Let $\adv=(\adv_0,\adv_1)$ be a player playing the game. 
The $\ell$-th root finding game $\rfg$ goes as follows: 
\begin{enumerate}
 \item $\adv_0$ is given the output $(\RO, G_n,\times,f,t)\leftarrow \gen(1^\lambda,N)$.
 \item $\adv_0$ chooses an element $w \in \multgroup{\Delta}$ and computes some information $\st$.
 \item a prime $\ell \xleftarrow{\$}\PRIMES_\lambda$ is sampled uniform at random from the set of first $2^\lambda$ primes.
 \item observing $\ell$ and $\st$, $\adv_1$ outputs an element $v\in\multgroup{\Delta}$.
\end{enumerate}
The player $\adv$ wins the game $\rfg$ if $v^\ell=w\mod \Delta$.
\end{definition}

This problem is shown to be hard in the generic group
model~\cite{ARA2019GGM}.
In the game $\rfg$, $\adv$ obtains the $\Delta=pq$ from the description $f$ but learns nothing
about the safe primes $p$ and/or $q$. So $\adv$ does not know the order $(p-1)(q-1)$
of the group $\multgroup{\Delta}$. Thus this problem $\rfg$ is also hard, but its
relationship with any standard hard problems like integer 
factorization of $\Delta$ or RSA problem is still unknown. However, a more generic
problem of finding the $k$-th root over $\multgroup{\Delta}$ is believed to be hard.
For $k=2$, there exists a randomized reduction from the factoring of $\Delta$ to this
problem~\cite{Karl16Square}.
Observe that, this happens in the game $\rfg$ with the probability
$\Pr[\ell=2]=1/| \PRIMES_\lambda |=\negl(\lambda)$.
For arbitrary $k$, this effort can be reduced to 
$L_\Delta(\frac{1}{3},\sqrt[3]{\frac{32}{9})}$
\footnote{where $L_\Delta(\beta,\delta)=
\mathtt{exp}(\delta(1+o(1))\O(\log^{\beta}\Delta \log \log^{1-\beta}\Delta))$.} 
than 
$L_\Delta(\frac{1}{3},\sqrt[3]{\frac{64}{9})}$
for factoring, 
however, only in the presence of subexponential access to an oracle
that provides affine modular roots~\cite{Joux07Root}.
Thus the adaptive root assumption holds true for $\multgroup{\Delta}$ too.

\begin{definition}\normalfont{(\bf Adaptive Root Assumption for $\multgroup{\Delta}$)} \label{adaptive} 
There exists no efficient algorithm $\adv=(\adv_0,\adv_1)$ that wins the $\ell$-th Root Finding Game
$\rfg$ (Def.~\ref{lthroot}) with non-negligible probability in the security parameter $\lambda$.

\[
\Pr\left[
\begin{array}{l}
v^\ell=w \ne 1
\end{array}
\Biggm| \begin{array}{l}
\mathbf{pp}\leftarrow\mathsf{Gen}(1^\lambda,N)\\
(w,\st)\leftarrow\adv_0(1^\lambda,N,\mathbf{pp})\\
\ell\xleftarrow{\$}\PRIMES_\lambda\\
v\leftarrow\adv_1(w,\st)\\
\end{array}
\right]\\
\le \negl(\lambda).
\]
\end{definition}

However, we need another version of the game $\rfg$ in order to reduce the
soundness-breaking game for our $\PoSW$.

\begin{definition}\normalfont{(\bf $\ell$-th Root Finding Game with Random Oracle
$\rfg_{\mathsf{H}}$)}\label{lthrootH} 
Let $\adv=(\adv_0,\adv_1)$ be a player playing the game. 
The $\ell$-th root finding game with random oracle $\rfg_\mathsf{H}$ goes as follows: 
\begin{enumerate}
 \item $\adv_0$ is given the output $(\RO, G_n,\times,f,t)\leftarrow \gen(1^\lambda,N)$.
 \item $\adv_0$ chooses an element $w \in \multgroup{\Delta}$, $x\in \mathcal{X}$ and
$u\in \{0,1\}^*$. Also $\adv_0$ computes some information $\st$.
 \item a prime $\ell\leftarrow\H(u)$ is sampled.
 \item observing $\ell$ and $\st$, $\adv_1$ outputs an element $v\in\multgroup{\Delta}$.
\end{enumerate}
The player $\adv$ wins the game $\rfg_\mathsf{H}$ if $v^\ell=w\mod \Delta$.
\end{definition}

The computational hardness of both the games $\rfg$ and $\rfg_\mathsf{H}$ 
are equivalent as the responses of $\mathsf{H}$ are sampled uniformly at random from
$\PRIMES_\lambda$. Thus the adaptive root assumption (Def.~\ref{adaptive}) holds for the
game $\rfg_\RO$ also.

\begin{theorem}\label{thm: soundness}
 Suppose $\adv$ be an adversary who breaks the soundness of this proposed $\PoSW$
 with probability $p_{win}$. Then there exists \emph{one} of these two attackers,
 \begin{enumerate}
  \item $\sqrt{\adv}$ winning the root finding game with random oracle $\rfg_\mathsf{H}$, 
  \item $\adv_{\CP}$ breaking the soundness of the $\CP$ construction,
 \end{enumerate}
 with the same probability $p_{win}$.
\end{theorem}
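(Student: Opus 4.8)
The plan is to set up a reduction: assume $\adv=(\adv_0,\adv_1)$ breaks soundness, so with probability $p_{win}$ it outputs $(\phi',\pi)$ with $\phi'\neq\phi=\solve(\pp,x)$ yet $\verify$ accepts on all $t$ challenges. The first step is to analyze what acceptance on challenge $\gamma_i$ means: $\verify$ recomputes $p_0=\H(0^n)$ honestly, recomputes $p_{\gamma_i}=\H(\gamma_i\|p_{k_1}\|\ldots\|p_{k_j})$ from the supplied $\sigma_i$, and checks $\phi'=(\tau_i)^{p_{\gamma_i}}\bmod\Delta$. So each accepted proof hands us an explicit $p_{\gamma_i}$-th root of $\phi'$ in $\multgroup{\Delta}$, namely $\tau_i$. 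The key dichotomy I would draw is between two cases for the adversary's behaviour on the underlying $\CP$ graph labels $\sigma_i$. Either the labels $\sigma_i$ that $\adv$ reveals are \emph{consistent} with a genuine $\CP$-labeling of $G_n$ whose root label is $\phi'$ --- in which case $\adv$ has actually produced a convincing (but wrong) $\CP$ commitment, and we can extract $\adv_{\CP}$ breaking $\CP$-soundness directly --- or the labels are \emph{inconsistent} with any honest $\CP$-labeling, meaning somewhere along the reconstruction path the claimed label does not match $\H$ of its parents, yet $\verify$ still accepted because it only performs the single algebraic check. It is this second branch that must be converted into a root-finding solver $\sqrt{\adv}$.

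For the root-finding branch, I would argue as follows. When the revealed graph is inconsistent with $\phi$ but $\verify$ accepts, the prime $p_{\gamma_i}=\H(\gamma_i\|\cdots)$ is a value the adversary did not ``control'': it is a fresh random-oracle output, distributed uniformly over $\PRIMES_\lambda$, and the equation $\tau_i^{p_{\gamma_i}}=\phi'\bmod\Delta$ exhibits $\tau_i$ as a $p_{\gamma_i}$-th root of $\phi'$. This is precisely the winning condition of $\rfg_\mathsf{H}$ (Def.~\ref{lthrootH}) with $w=\phi'$, $\ell=p_{\gamma_i}$, $u=(\gamma_i\|p_{k_1}\|\ldots\|p_{k_j})$, and $v=\tau_i$: the reduction $\sqrt{\adv}$ runs $\adv_0$ to get $\st$ and commits $w=\phi'$ (it can compute $\phi'$ by running $\adv_1$ once, or by a standard ``commit then forking'' argument), then on receiving the challenge prime $\ell=\H(u)$ it replays $\adv_1$ to obtain $\tau_i$. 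We must rule out the degenerate $\phi'=1$: if $\phi'=1=p_0^0$ this would force an honest relation, but $p_0=\H(0^n)$ is a prime strictly between $1$ and $\Delta$ so $\phi'=p_0^\rho\neq 1$ for the honest $\phi$, and if the adversary's $\phi'=1$ then the algebraic check forces $\tau_i\equiv\pm1$ and one argues this cannot reproduce a distinct valid commitment except with negligible probability. Hence $\sqrt{\adv}$ wins $\rfg_\mathsf{H}$ with essentially the same probability $p_{win}$ conditioned on being in this branch.

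The consistency-branch is the easier half: if for every challenged leaf the revealed $\sigma_i$ together with the single check $\phi'=(\tau_i)^{p_{\gamma_i}}$ is mutually consistent with a well-defined $\CP$-labeling rooted at $\phi'$, then the triple $(x,N,\phi')$ is exactly the kind of fraudulent $\CP$-commitment that $\CP$-soundness forbids: we define $\adv_{\CP}$ to run $\adv_0,\adv_1$ and, whenever the verifier in $\CP$ asks for $\parent(\gamma_i)$, forward $\sigma_i$; since the algebraic check in our scheme passes iff the $\CP$-root reconstructed from these labels equals $\phi'$ (using property~3 of $\otimes$, which makes $\tau_i=f(p_0,\rho\otimes p_{\gamma_i}^{-1})$ \emph{well-defined} only when $p_{\gamma_i}$ genuinely occurs among the graph labels), $\adv_{\CP}$ fools the $\CP$-verifier with probability $p_{win}$ in this branch. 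Combining, $p_{win}\le \Pr[\text{branch 1}]\cdot\Pr[\sqrt{\adv}\text{ wins}\mid\text{branch 1}]^{-1}$-type bookkeeping collapses to: at least one of the two attackers succeeds with probability $p_{win}$. The main obstacle I anticipate is making the ``consistency'' dichotomy airtight --- precisely, showing the single algebraic check $\phi'=(\tau_i)^{p_{\gamma_i}}$ cannot be satisfied by a $\tau_i$ that is neither a faithful $\CP$-reconstruction value \emph{nor} a genuine forced root, i.e. closing off a hybrid where the adversary mixes a partially-correct graph with an algebraic shortcut; this is where property~3 of $f$ (the quantity being defined \emph{iff} $p_{\gamma_i}$ is a true label) has to be leveraged carefully, together with the fact that $\multgroup{\Delta}$ has no small-order elements because $\Delta$ is a product of safe primes.
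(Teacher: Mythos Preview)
Your high-level decomposition matches the paper's: a case split yielding either a root-finder $\sqrt{\adv}$ or a $\CP$-soundness breaker $\adv_{\CP}$. The execution differs in both branches, and one of your branches contains a genuine slip.

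\textbf{Root-finding branch.} The paper's $\sqrt{\adv}$ picks an \emph{arbitrary} $w\in\multgroup{\Delta}$, fixes $u=i\|p_{k_1}\|\ldots\|p_{k_j}$, and then ``challenges $\adv$ on the leaf $i$ against the commitment $(x,w,N)$'' to harvest $\tau$. That is, the paper treats $\adv$ as a box that accepts an externally supplied commitment and returns a proof, which is not what the soundness game (Def.~\ref{def: Soundness}) grants: there $\adv_1$ outputs $(\phi,\pi)$ itself. Your reduction instead runs $\adv$ to obtain $\phi'$ and sets $w=\phi'$, which is the more defensible move. Note, incidentally, that because $\rfg_\mathsf{H}$ lets $\adv_0$ choose $u$ along with $w$, you do not actually need a forking step: a single run of $\adv$ already furnishes $w=\phi'$, $u=\gamma_i\|\sigma_i$, and $v=\tau_i$ simultaneously. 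Your care about $\phi'=1$ and safe primes is a refinement absent from the paper.

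\textbf{$\CP$-soundness branch.} Here you have a conceptual error. You write that acceptance means ``the $\CP$-root reconstructed from these labels equals $\phi'$'', and build $\adv_{\CP}$ by forwarding $\sigma_i$ against the commitment $\phi'$. But $\phi'$ in the present scheme is $p_0^{\rho}\bmod\Delta$, \emph{not} the root label $p_\epsilon$ of $G_n$; the two live in different sets and the verifier here never recomputes $p_\epsilon$ at all. The algebraic check $\phi'=(\tau_i)^{p_{\gamma_i}}$ says nothing about whether the $\CP$-root rebuilt from $\sigma_i$ matches anything. The paper's version of this branch simply asserts that, absent a root-finder, the labels $\sigma_i$ returned by $\adv$ must be consistent (so $\adv_{\CP}$ forwards them and lets the $\CP$-verifier rebuild its own root); it does not claim that root equals $\phi'$. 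Your reduction needs the same shape: let $\adv$ commit, extract the labels $\sigma_i$, and have $\adv_{\CP}$ submit whatever $\CP$-root those labels determine as its $\CP$-commitment. The ``iff'' you invoke via property~3 of $\otimes$ does not bridge this gap, because that property concerns membership of $p_{\gamma_i}$ in the label multiset, not equality of $\phi'$ with the Merkle-style root.
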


\begin{proof}
We give the adversaries one by one.
 \begin{description}
  \item [{\normalfont Construction of $\sqrt{\adv}$:}]
  Suppose $\adv_{\CP}$ does not exist.
 $\sqrt{\adv}$ chooses an arbitrary $w \in \multgroup{\Delta}$. Now 
(s)he needs to choose a $u$ and to find a $\tau$ such that $\tau^\ell=w\in \multgroup{\Delta}$ where $\ell=\H(u)$.
So $\sqrt{\adv}$ fixes $u=i\| p_{k_1}\| \ldots \|p_{k_j}$ where $i$ is a leaf in $G_n$ and $\parent(i)=\{k_1, \ldots, k_j\}$.
Then $\sqrt{\adv}$ challenges $\adv$ on the leaf $i$ against the commitment $(x,w,N)$. 
 $\adv$ breaks the soundness with the probability $p_{win}$, 
 so $\adv$ must find a $\tau$ such that $\tau^\ell=w\in \multgroup{\Delta}$ where 
 $\ell=\H(i\| p_{k_1}\| \ldots \|p_{k_j})$. When $\adv$ returns $\tau$,   
 $\sqrt{\adv}$ outputs $\tau$ and wins the game $\rfg_\mathsf{H}$ with probability $p_{win}$.

% In $\verify$ for each $p_{\gamma_i}$,
% Here $\tau_i= p_0^{ (\rho \times p_{\gamma_i}^{-1})}\;\mathbf{mod}\; \Delta$ 
% is defined if and only if $p_{\gamma_i}$ divides $\rho$ as $f$ does not operates $\mathbb{Q}^+$.
% At the same time, choosing arbitrary $\rho'$ and $\phi'=g^{\rho'}\;\mathbf{mod}\; \Delta$ 
% makes an adversary $\adv$ to find a $\tau_i$ such that 
% $\phi'=(\tau_i)^{p_{\gamma_i}}\;\mathbf{mod}\; \Delta$. 
% Unless ${p_{\gamma_i}}$ divides $\phi'$, it is an instance of root finding problem (Def.~\ref{root}) in the 
% multiplicative group $(\mathbb{Z}/\Delta\mathbb{Z})^\times$ 
% which is believed to be hard. As ${p_{\gamma_i}}$ is sampled after committing $\phi'$, the probability that 
% ${p_{\gamma_i}}$ divides $\phi'$ is negl($\lambda$).

\item [{\normalfont Construction of $\adv_{\CP}$:}]
Suppose $\sqrt{\adv}$ does not exist.
On a random challenge $\gamma_i$, $\adv_{\CP}$ breaks the soundness of the $\CP$-construction 
if $\adv_{\CP}$ succeeds to find the labels of the nodes 
that are required to construct the root label $\phi$ of $G_n$. 
By the second property of the $\CP$-graph (See Sect.~\ref{CP}), $\parent(\gamma_i)$ is necessary and sufficient 
to construct the root label. So, $\adv_{\CP}$ calls $\adv$ on the commitment $(x,\phi,N)$ against the $\gamma_i$.  
$\adv$ finds $\ell=\H(\gamma_i\| p_{k_1}\| \ldots \|p_{k_j})$ 
such that $\parent(\gamma_i)=\{k_1, \ldots, k_j\}$. 
These labels $p_{k_1}, \ldots ,p_{k_j}$ are consistent with probability   
$p_{win}$. So,  $\adv_{\CP}$ returns $p_{\gamma_i}$ and breaks the soundness of $\CP$ construction with the 
probability $p_{win}$.
\end{description}  
\qed
\end{proof}

The adaptive root assumption implies that $\sqrt{\adv}$ has only negligible
advantage. On the other hand, for $t$ number of challenges, the advantage of $\adv_\CP$ is upper bounded by $(1-\alpha)^t$
where $(1-\alpha)$ is the fraction of honest queries. We ignore the negligible slack in 
soundness caused by the collision in $\RO$. 
Therefore, it holds that,

\[
\Pr\left[
\begin{array}{l}
\phi\ne\mathsf{Solve}(\mathbf{pp},x)\\
\mathsf{Verify}(\mathbf{pp},\pi,x,\gamma,\phi)=1
\end{array}
\Biggm| \begin{array}{l}
\mathbf{pp}\leftarrow\mathsf{Gen}(1^\lambda,N)\\
\st\leftarrow\adv_0(1^\lambda,N,\mathbf{pp})\\
x\xleftarrow{\$}\mathcal{X}\\
(\phi,\pi)\leftarrow\adv_1(\st,x)\\
\end{array}
\right]\\
\le (1-\alpha)^t.
\]

Additionally, the factorization of $\Delta$ should not be known. 
\begin{lemma}\label{factorization}
 If $\adv$ knows the integer factorization of $\Delta=pq$ then 
  $\verify(\mathbf{pp},x,\phi',\gamma,\pi)=1$, 
 for any arbitrary $\rho'$ and $\phi'=g^{\rho'}\;\mathbf{mod}\; \Delta$.
 
\end{lemma}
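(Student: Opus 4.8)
The plan is to observe that the only secret in this construction is effectively the group order of $\multgroup{\Delta}$, and that revealing it collapses each verification equation to a trivial root extraction. First I would have $\adv$ use its knowledge of $p$ and $q$ to compute $\varphi(\Delta)=(p-1)(q-1)$. Then $\adv$ commits to a bogus value by fixing any base $g\in\multgroup{\Delta}$ and any exponent $\rho'\in\Z^+$, announcing $(x,N,\phi')$ with $\phi'=g^{\rho'}\mod \Delta$; for instance taking $g=p_0$ and $\rho'\not\equiv\rho\pmod{\varphi(\Delta)}$ makes $\phi'\ne\solve(\mathbf{pp},x)=p_0^{\rho}\mod \Delta$, so soundness is indeed the property being violated.

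Next, on receiving the challenge vector $\gamma=\{\gamma_1,\ldots,\gamma_t\}$, $\adv$ answers each $\gamma_i$ by running $\open$ with an \emph{arbitrary} choice of parent labels $\sigma_i$: the verifier never cross-checks $\sigma_i$ against anything, it only feeds it to $\RO$ to recompute $p_{\gamma_i}=\H(i\|\sigma_i)\in\PRIMES_\lambda$. Since $p$ and $q$ are safe primes, $\varphi(\Delta)=4p'q'$ with $p',q'$ prime, so $p_{\gamma_i}$ divides $\varphi(\Delta)$ only if $p_{\gamma_i}\in\{2,p',q'\}$ --- an event of negligible probability over $\RO$, and one that $\adv$ can sidestep entirely by re-encoding $\sigma_i$. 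In the generic case $\adv$ computes the ``decryption exponent'' $e_i\equiv p_{\gamma_i}^{-1}\pmod{\varphi(\Delta)}$, sets $\tau_i=(\phi')^{e_i}\mod \Delta$, and returns $\pi_i=\{\sigma_i,\tau_i\}$.

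Finally I would check that $\verify$ accepts. Writing $e_i\,p_{\gamma_i}=1+k_i\varphi(\Delta)$ and using $\phi'\in\multgroup{\Delta}$, Euler's theorem gives $f(\tau_i,p_{\gamma_i})=(\tau_i)^{p_{\gamma_i}}\mod \Delta=(\phi')^{e_i p_{\gamma_i}}\mod \Delta=\phi'\cdot\bigl((\phi')^{\varphi(\Delta)}\bigr)^{k_i}\mod \Delta=\phi'$, so each flag $r_i$ is set and $\verify(\mathbf{pp},x,\phi',\gamma,\pi)=1$. There is essentially no hard step in this argument: the only point that needs care is the coprimality of the challenge primes with $\varphi(\Delta)$, which fails only with negligible probability and is in any case under $\adv$'s control; everything else is the textbook RSA-style exponent inversion. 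This is precisely why $\mathbf{pp}$ must keep the factorization of $\Delta$ hidden from $\mathcal{P}$ and $\mathcal{V}$ alike.
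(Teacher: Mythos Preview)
Your proof is correct and follows the same route as the paper: once $\varphi(\Delta)=(p-1)(q-1)$ is known, the adversary inverts each challenge prime $p_{\gamma_i}$ modulo $\varphi(\Delta)$ and extracts the $p_{\gamma_i}$-th root of $\phi'$ as $\tau_i$, which makes every verification equation $f(\tau_i,p_{\gamma_i})=\phi'$ hold. Your write-up is in fact more careful than the paper's own argument---you explicitly handle the coprimality of $p_{\gamma_i}$ with $\varphi(\Delta)$ via the safe-prime structure and note that $\sigma_i$ can be arbitrary---but the underlying idea is identical.
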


\begin{proof}
Suppose $\adv$ chooses an arbitrary $\rho'$ and a commitment $\phi'$ without labelling $G_n$. 
Now, on an random challenge $\gamma_i$, $\adv$ 
finds $p_{\gamma_i}\leftarrow\H(i\| p_{k_1}\| \ldots \|p_{k_j})$ s.t $\parent(\gamma_i)=\sigma_i=\{k_1, \ldots, k_j\}$. 
Then (s)he computes 
$\tau_i'\leftarrow p_0^{ (\rho' \times p_{\gamma_i}^{-1})\;\mathbf{mod}\; \varphi(\Delta)}\;\mathbf{mod}\; \Delta$ 
where $\varphi(\Delta)=(p-1)(q-1)$ is the Euler's totient function and is the order of the group 
multiplicative group $(\mathbb{Z}/\Delta\mathbb{Z})^\times$. Therefore, 
$\phi'=(\tau_i')^{p_{\gamma_i}}\;\mathbf{mod}\; \Delta$. Therefore, we have
$\verify(\mathbf{pp},x,\phi,\gamma,\pi)=1$. 
 
\end{proof}

\subsection{Sequentiality} 

 By sequentiality we mean that no parallel adversary should be able to find a valid commitment and also proofs 
 in sequential time $<N-o(N)$  with a non-negligible probability.
\begin{theorem}\label{thm: sequentiality}
 Suppose $\adv$ be an adversary who breaks the sequentiality of this proposed $\PoSW$
 with probability $p_{win}$. Then there exists an attacker 
 $\adv'_{\CP}$ breaking the sequentiality of the $\CP$ construction,
%  \end{enumerate}
 with the same probability $p_{win}$.
\end{theorem}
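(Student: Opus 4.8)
The plan is to mirror the reduction used in Theorem~\ref{thm: soundness}: show that any adversary $\adv$ who produces a valid commitment $(x,N,\phi)$ in parallel time $< N - o(N)$ can be transformed into an adversary $\adv'_{\CP}$ that produces a valid commitment for the $\CP$ construction in essentially the same parallel time, contradicting the sequentiality of $\CP$ proved in~\cite{Cohen2018Simple}. The key observation is that the \emph{only} extra work our $\solve$ algorithm performs beyond labelling the $\CP$-graph $G_n$ is the running product $\rho \leftarrow \rho \times p_i$ and the running exponentiation $\phi \leftarrow f(\phi, p_i) = \phi^{p_i} \bmod \Delta$. Neither of these adds to the sequential depth in the way that matters: the labels $s_1,\ldots,s_N$ are exactly the $\CP$ labels (our random oracle merely has range $\PRIMES_\lambda$ instead of $\{0,1\}^\lambda$, which does not affect Lemma~\ref{thm:Sequential}), and once those labels exist the product and the exponentiation are deterministic functions of them.

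First I would have $\adv'_{\CP}$ run $\adv$ internally on the same input $x$, forwarding all random-oracle queries. When $\adv$ halts having committed to a $\phi$, $\adv'_{\CP}$ needs to output the $\CP$-root label, i.e.\ $s_\epsilon = \H(\epsilon \| s_0 \| s_1)$ in the notation of Sect.~\ref{CP}. The point is that for $\adv$'s commitment to be valid under our $\verifyH$, $\adv$ must — by the soundness argument of Theorem~\ref{thm: soundness} and the adaptive root assumption — have actually labelled the relevant nodes of $G_n$ consistently; in particular it must know the labels $p_0,\ldots,p_N$ along the $(1-\alpha)N$-length path guaranteed by the second property of $G_n$. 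So $\adv'_{\CP}$ can extract the $\CP$ root label from $\adv$'s internal state (or simply recompute it from the labels $\adv$ queried), and output it as its own commitment. Because $\adv$ ran in parallel time $\delta < N - o(N)$ and the extra post-processing (reading off one label) is $\O(1)$ sequential steps, $\adv'_{\CP}$ also runs in parallel time $< N - o(N)$, and it succeeds whenever $\adv$ does — with probability $p_{win}$.

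Concretely the chain of inequalities I would write is: $p_{win} = \Pr[\adv \text{ breaks sequentiality}] \le \Pr[\adv'_{\CP} \text{ breaks sequentiality of } \CP]$, and the latter is $\le \negl(\lambda)$ by the sequentiality of the $\CP$ construction, which itself rests on Lemma~\ref{thm:Sequential}. Hence $p_{win} \le \negl(\lambda)$, establishing Definition~\ref{def: Sequentiality}. I would also remark explicitly — as the paper promises in Sect.~\ref{source} — that the sequentiality does \emph{not} rely on $f$ being a slow (time-lock) operation: even if $f$ took $\O(1)$ time, the $N$ sequential random-oracle queries needed to compute the labels $p_0,\ldots,p_N$ are unavoidable, and that is exactly what $\CP$'s sequentiality captures.

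The main obstacle is making the extraction step rigorous: I must argue that a successful $\adv$ genuinely possesses a consistent labelling of the path in $G_n$, rather than having somehow forged $\phi$ without it. This is precisely where Theorem~\ref{thm: soundness} and the adaptive root assumption are doing the heavy lifting — if $\adv$ could produce a valid $\phi$ without the labels, it would be winning $\rfg_\RO$, contradicting Definition~\ref{adaptive}. So the careful version of the proof invokes the soundness reduction as a black box: conditioned on not breaking the adaptive root assumption (a negligible event), any sequentiality-breaking $\adv$ for our scheme yields a sequentiality-breaking $\adv'_{\CP}$ for $\CP$, and the negligible slack is absorbed into the final $\negl(\lambda)$ bound.
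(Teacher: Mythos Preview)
Your reduction differs from the paper's. The paper's $\adv'_{\CP}$ does not simulate $\adv$ and extract the root label; instead, given an arbitrary node $v\in G_n$, it picks a leaf $u$ with $v\in\parent(u)$, feeds $\adv$ the challenge $u$, and reads $p_v$ off the returned $\sigma_u=\parent(u)$. That is, the paper implicitly treats a sequentiality-breaking $\adv$ as one that also answers $\open$ challenges, and uses it as an oracle for individual node labels. Your simulate-and-extract strategy is arguably more faithful to Definition~\ref{def: Sequentiality}, which only asks $\adv$ to output $\phi$, but both routes aim at the same conclusion.

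There is, however, a gap in how you justify the extraction. You invoke Theorem~\ref{thm: soundness} and the adaptive root assumption to argue that $\adv$ ``must have actually labelled the relevant nodes''. But soundness (Definition~\ref{def: Soundness}) concerns adversaries with $\phi\neq\solve(\mathbf{pp},x)$ that nonetheless pass $\verifyH$; sequentiality is the complementary event $\phi=\solve(\mathbf{pp},x)$. Here $\adv$ outputs the \emph{exact} commitment $\phi=p_0^{\rho}\bmod\Delta$, so there is no forgery and no root-finding game in play---the adaptive root assumption simply does not bite. The direct justification is that $\rho=\prod_{i=1}^{N}p_i$ contains the $\CP$ root label as a factor and each $p_i$ is an $\H$-output; an $\adv$ producing the correct $\phi$ without having queried $\H$ on the root's preimage would have to predict an unqueried random-oracle value, which succeeds with probability at most $\negl(\lambda)$. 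Hence $\adv'_{\CP}$ can simply scan $\adv$'s query transcript and output the root label. Drop the soundness detour; it is both unnecessary and inapplicable here.
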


\begin{proof}
 $\adv'_{\CP}$ wants to label the $G_n$ consistently in time $\delta < N- o(N)$. 
 Here ``consistently" means given any random node $v \in G_n$,  $\adv'_{\CP}$ 
 must be able to figure out the label $p_v$ that are consistent with the labeling of $G_n$. 
 So, given the node $v$ on the input $x$, $\adv'_{\CP}$ calls
 the $\adv$ on the input $x$ and a random challenge $u$ such that $v \in \parent(u)$. 
 $\adv$ finds the label $p_v$ such that $p_u=\H(u \| \ldots \| p_v \| \ldots )$, 
 as it breaks the sequentiality of the proposed $\PoSW$. Thus $\adv'_{\CP}$ returns $p_v$ 
 on any random $v\in G_n$ and breaks the sequentiality of the $\CP$ construction
 with the same probability $p_{win}$.
 \qed
\end{proof}

So we have,
\[
\Pr\left[
\begin{array}{l}
\phi=\solve(\mathbf{pp},x)
\end{array}
\Biggm| \begin{array}{l}
\mathbf{pp}\leftarrow\mathsf{Gen}(1^\lambda,N)\\
\st\leftarrow\adv_0(1^\lambda,N,\mathbf{pp})\\
x\xleftarrow{\$}\mathcal{X}\\
\phi\leftarrow\adv_1(\st,x)
\end{array}
\right]
=\negl(\lambda).
\]

The most important corollary of the this theorem is,

\begin{description}

%  Evidently, it is sufficient to take $f(\lambda)=|\bigoplus\mathcal{S}_{2^\lambda-1}|$ to 
%  maintain the uniqueness of $\oplus$. We disallow the element 
%  $0^\lambda$ in order to avoid complications in case $\oplus$ is replaced by multiplicative operations. Note that, 
%  operations involving modulo $\mathbf{mod}(2^\lambda-1)$ yields $0^\lambda$, however, are ruled out as their results are 
%  not unique.
 \item [Source of Sequentiality:] \label{source}
 Theorem \ref{thm: sequentiality} never borrow the sequentiality of $f$. It is solely based on the 
 sequentiality of $\CP$-construction. So our $\PoSW$ stands sequential 
 even if $f$ is computable in $\O(1)$-time. In our design $f$ happens to be a sequential function but not necessarily.  
 We consider the random oracle $\mathsf{H}$ as the 
 only source of sequentiality and measure the sequentiality with the rounds of queries to $\mathsf{H}$.
 The map $f$ serves the purpose of fast verification using short proofs and the soundness. In fact, we want $f$ to be as efficient as possible 
 satisfying only a few properties mentioned in the Def.~\ref{Gen}.
%  Their definitions do not count sequentiality at all. 
%  We take $\otimes$ and $f$ as efficient operations 
%  and will never count them in the sequential effort made by the prover and/or verifier.
%  In Sect.~\ref{practical} we instantiate $\otimes$ 
%  and $f$ in order to portray a practical $\PoSW$ out of this design. 
%  This practical instantiation of $f$ happens to be sequential. We stress that replacement of $f$ even
%  with an $\O(1)$-time operation does not affect the sequentiality of our $\PoSW$ 
%  (see Theorem \ref{thm: sequentiality}).
%  

%  \item [Necessity of $\otimes$ and $f$:] The operation $\otimes$ and the map $f$ are essential as we need fast verification 
%  and short proofs/commitments at the same time. The map $f$ restricts the size of the proofs/commitments.
%  However, $\mathbb{R}$ is not closed under $f$ so $f$ does not allow to define inverses. On the other hand, $\otimes$ 
%  allows to have inverses but the size of the product $(s_0\otimes s_1 \otimes \ldots \otimes s_k)$ increases 
%  with $k$. So only $\otimes$ does not turn out to be a good option in order to generate the proof. The size of the 
%  result $f(s_0,s_1 \otimes  \ldots \otimes s_k)$ is always fixed to $\log_2 |\mathbb{T}|$.
 \end{description}

\section{Efficiency Analysis}\label{efficiency}
Here we discuss the efficiencies of both the prover $\mathcal{P}$ 
and the verifier $\mathcal{V}$ and the memory requirement for the commitment and the proof. 
We reiterate that we count the number of $\H$-queries only in order to estimate the 
sequential effort made by $\mathcal{P}$ and $\mathcal{V}$. 
The efforts to execute $f$ have been assumed to be efficient and need not be sequential.
% Even $\mathcal{O}(1)$ instance of them does not affect the sequentiality achieved by our $\PoSW$.
% Concrete construction in Sect.~\ref{practical} will let us estimate these quantities accurately.

First we need to determine the 
size of the group $(\mathbb{Z}/ \Delta \mathbb{Z})^\times$.

\subsection{Size of $(\mathbb{Z}/ \Delta \mathbb{Z})^\times$}
The number of bits to encode an element of the group $(\mathbb{Z}/ \Delta \mathbb{Z})^\times$ 
is $\log_2\Delta$. By Lemma~\ref{factorization}, it suffices to choose a $\Delta$ that takes $\Omega(N)$-time 
to be factored. It is shown in Sect.~\ref{subexp} that $N \in 2^{o(n)}$. 

So we describe the size of $\Delta=pq$ as a function of $\log N$ reported in~\cite{RSA}.
Here the sizes of the safe primes $p$ and $q$ are roughly $(\log_2\Delta)/2$. 
\begin{table}[h]
\caption{Size of $\Delta$ as a function of $\log N$.}
\label{tab : RSA}
 \centering
 \begin{tabular}{|l|r|}
    \hline
        $\;\log N\;$ & $\;\log \Delta\;$\\ \hline
        $\;80\;$  & $\;1024\;$       \\ \hline
        $\;112\;$ & $\;2048\;$       \\ \hline
        $\;128\;$ & $\;3072\;$       \\ \hline
	$\;192\;$ & $\;7680\;$       \\ \hline
        $\;256\;$ & $\;15360\;$       \\ \hline
  \end{tabular}
\end{table}

Further, assuming the generalized number field sieve method to be the most efficient heuristic 
for integer factorization, analytically, 
  \begin{align*}
          e^{((64/9)^{1/3}+o(1))(\ln \Delta)^{1/3}(\ln \ln \Delta)^{2/3}}  &\ge N \\
	  ((64/9)^{1/3}+o(1))(\ln \Delta)^{1/3}(\ln \ln \Delta)^{2/3}      &\ge \ln N  \\
	  \ln \Delta(\ln \ln \Delta)^2 &\ge \big(\frac{\ln N}{ (64/9)^{1/3}+o(1)}  \big)^3.
 \end{align*}

 Roughly $\tilde{\O}(\log \Delta)=\O(\log^3 N)$ where $\tilde{\O}(h(n))=h(n).\log^k h(n)$ for some $k \in \Z$.

\subsection{Proof Size} The size of the commitment $\phi$ is 
$\log_2 \Delta$. 
Each proof $\pi_i$ has two parts $s_{i}$ and $\tau_i$. The
$\sigma_i=\parent(\gamma_i)$ has at most $\log N$ number of $\lambda$-bit strings.
Thus $|\sigma_i|=\O(\lambda\log N )$. Additionally, $|\tau_i|=\log \Delta$. 
So $|\pi_i|=\O(\lambda\log N) + \log \Delta \approx \O(\lambda\log N+\log^3 N)$. 
Thus the dominating factor depends on the  $max\{\log N, \sqrt{\lambda}\}$.
So the total size of the proof $\pi=t \cdot \pi_i=\O(t(\lambda\log N+\log^3 N))$.
% $\gamma$ needs $t\log N$ space.
 
 \subsection{Prover's Efficiency}  
 
 A prover $\prv$ may adopt the similar approach like the $\CP$-construction. On random challenge $\prv$ needs to 
 find out consistent labels of its parent. $\prv$ may store $2^{m+1}$ labels at $m$ upmost levels into 
 the local information $\phi_\prv$ for some $0 \le m \le n=\log N$.
 \begin{itemize}
  \item for $m=0$,  $\prv$ labels $G_n$ spending $N$ rounds of queries once again.
  \item for $0 < m < n$, $\prv$ computes $2^{n-m+1}$ labels once again sequentially. 
  \item for $m=n$,  $\prv$ stores the entire labeled $\CP$-graph. No further queries are required.
 \end{itemize}

  In general, $\prv$ requires at most $\lambda N^{1-\beta}$ bits of memory for some $0<\beta<1$ if 
  (s)he is willing to spend only $tN^\beta$ queries in $\open$.
\subsection{Verifier's Efficiency} 

 This is where our $\PoSW$ outperforms all other existing $\PoSW$ demanding $\O(\log N)$ oracle queries.
 Note that in algorithm~\ref{alg : verify}, even a non-parallel $\vrf$ needs only a \emph{single} query to 
 the random oracle $\H$ in order to verify on a random challenge. 
 Thus our $\PoSW$ needs only $t$ queries in total instead of $\O(t \log N)$ queries as in the prior works.
 The memory requirement of $\vrf$ is same as the proof size.

\subsection{Our Contribution}\label{contributions}
 
 We summarize the comparison among all the $\PoSW$s in Table ~\ref{tab : PoSW}.

\begin{table*}[h]
\caption{Comparison among all the $\PoSW$s. Here $N$ is the rounds of queries to the random oracle $\H$.
Parallelism is upper bounded by $\O(\log N)$ processors. The effort for verification is the effort to verify a 
single challenge among $t$ such challenges in total. 
All the quantities may be subjected to $\O$-notation, if needed.}
\label{tab : PoSW}
 \centering
 \begin{tabular}{|l|r|r|r|r|r|r}
    \hline
    Schemes  & \textsf{Solve} & \textsf{Solve} &  \textsf{Verify} &  \textsf{Verify} & Proof-size \\
    (by authors) & (Sequential) & (Parallel) &  (Sequential) &  (Parallel) &  \\

    \hline
    
    Mahmoody et. al~\cite{Mahmoody2013Sequential}   & $N$  & $N$  &  $\log^2 N$  &  $\log N$  & $\lambda\log N$ \\
    \hline

    Cohen and Pietrzak~\cite{Cohen2018Simple}       & $N+ \sqrt{N}$ & $N $ &  $\log^2 N$  & $\log N$   &  $\lambda\log N$ \\
    \hline
    
    Abusalah et al.~\cite{Abusalah2019Reversible}   & $N$           & $N$  &  $\log^2 N$  &  $\log N$  &  $\lambda\log^2 N$ \\
    \hline 
    
    D\"{o}ttling et. al~\cite{Dottling2019Incremental}  & $N+ \sqrt{N}$ & $N$  &  $\log N$    &  $1$       & $\lambda\log N$\\
    \hline
    
    \textbf{Our work}       & $N+ \sqrt{N}$ & $N$  & \textbf{$1$} & \textbf{$1$} &  $max\{\lambda \log N,\log^3 N\}$ \\
    \hline
  \end{tabular}
\end{table*}

We show that it is possible to design a $\PoSW$ with non-parallel verifier that queries the random oracle $\H$ only \emph{once} 
per random challenge.
In particular, the verifier in our $\PoSW$ demands no parallelism but verifies using a single query to $\H$ in 
each of the $t$ trials. The state-of-the-art $\PoSW$~\cite{Dottling2019Incremental} verifies 
with a single \emph{round} of queries to $\H$ only if $\O(\log N)$ parallelism is available to the verifier. 
Without the parallelism they make $\O(\log N)$ rounds of queries to $\H$.
The prior works~\cite{Mahmoody2013Sequential,Cohen2018Simple,Abusalah2019Reversible} 
require $\O(\log N)$ rounds of queries $\H$ even in the presence of parallelism.

The fundamental observation is that the Merkle 
root commitment scheme, used in all the existing $\PoSW$s, mandates $\O(\log N)$ queries to verify.
Thus they require $\O(\log N)$ parallelism to reduce the verification time to a single round of queries.
So we replace the Merkle root commitment with a binary operation $\otimes$ and a binary map $f$
that immediately reduces the non-parallel verification effort to a single oracle query from $\O(\log N)$ queries.

Further, we mention an important class of $\PoSW$s that mandates
much more fine-grained notion of soundness.
These are called verifiable delay functions ($\vdf$)
that are nothing but $\PoSW$s but with unique proofs. That is, the soundness of such
$\PoSW$s has to be upper bounded by $\negl(\lambda)$ instead of $(1-\alpha)^t$ only.

There exist $\vdf$s based on
modulo exponentiation in the groups of unknown orders~\cite{Wesolowski2019Efficient,Pietrzak2019Simple} and 
isogenies over super-singular curves~\cite{Feo2019Isogenie}.
From the perspective design, the key difference between these $\vdf$s
and the $\PoSW$s based on random oracle is the source of sequentiality.
Although these $\vdf$s are more sound, their sequentiality are conditional.
For example, the $\vdf$s based on modulo exponentiation assume the condition that
$x^{2^N}\mod{\Delta}$ takes $\Omega(N)$-sequential time. To the best of our knowledge,
this condition has neither a proof nor a counter-example.
 On the other hand, all the
existing $\PoSW$s including ours achieve unconditional sequentiality in the random
oracle model.

\section{Generalizing The Function $f$}\label{design}
An obvious question to the above-mentioned $\PoSW$ is that is it necessary to be $f=g^a \mod \Delta$ always?
Or is there any other option to instantiate $f$? We 
give a concrete guideline to choose $f$ with respect to an operation $\otimes$ in general in a generic $\PoSW$.

% In this section, we propose a generic model to design a $\PoSW$ that verifies with a single oracle query.
% The key primitives of this model are the operation $\otimes$, the function $f$
% and of course a random oracle $\mathsf{H}$. In order to have a practical $\PoSW$
% one needs to instantiate these primitives as we do in Sect.~\ref{practical}.

% Without the loss of generality, $N$ is increased by $1$ if it is odd. 
The four algorithms that specify our $\PoSW$ are now described.

\subsection{The $\gen(1^\lambda,N)$ Algorithm}\label{Gen}
We need four sets $\mathbb{T},\mathbb{S}\subset\mathbb{R}\subset\mathbb{U}\subset\{0,1\}^*$
such that $\log_2 |\mathbb{S}| =  \poly(\lambda)$. 
% We will explain the other sets $\mathbb{T},\mathbb{R}$ 
% and $\mathbb{U}$ in more detail in a practical construction in Sect.~\ref{practical}.
Although all of these sets are exponentially large, 
we neither enumerate nor include them in the generated public parameters $\mathbf{pp}$. 
These sets are required in order to define 
the binary operation $\otimes$ and the binary map $f$.
The algorithm $\mathsf{Gen}$($1^\lambda$) outputs the public parameters
$\mathbf{pp}=\langle\textsf{H},G_n,\otimes,t, f\rangle$ with
the following meanings.
\begin{enumerate}
\item $\mathsf{H}:\{0,1\}^*\rightarrow\{0,1\}^\lambda$ is a random oracle (See Sect.~\ref{RO}).
\item $G_n$ is the $\CP$ graph (See Sect.~\ref{CP}) such that w.l.o.g., $N=2^{n+1}-1$.
\item $\otimes:\mathbb{U}\times\mathbb{U}\rightarrow\mathbb{U}$ is a commutative and associative binary operation 
      such that
 \begin{enumerate}
 \item $\langle \mathbb{U} , \otimes \rangle $ forms a group but 
 $\langle \mathbb{R} , \otimes \rangle $ forms a monoid. 
%  $1_\mathbb{R}$ denotes its identity element.
 \item It allows \emph{efficient} computation of,
 \begin{enumerate} \item the product $(s_i\otimes s_j)$ for all $s_i,s_j\in\mathbb{U}$.
                 \item the inverse $s_i^{-1}$ for all $s_i\in\mathbb{U}$.
                \end{enumerate}
 \item For any subset $\mathcal{S}_k=\{s_0, s_1, \ldots, s_k\}\subseteq\{0,1\}^{k\lambda}$ the product 
 $((s_0\otimes \ldots\otimes s_k)\otimes s_i^{-1})\in\mathbb{R}$ if and only if $s_i\in\mathcal{S}_k$.
 \end{enumerate}

\item $f:\mathbb{R}\times\mathbb{R}\rightarrow\mathbb{T}$ be an \emph{efficient} binary map such that,
\begin{enumerate}
\item For all  $g,a,b\in\mathbb{R}$, $f(f(g,a),b)=f(g,(a\otimes b))$.
% \item There exists an right identity but not the left identity.
 \item Given the result $f(g,a)$ and $a$, it is hard to find $g$. 
 \end{enumerate}
 \item $t$ as the number of random challenges.
\end{enumerate}

  None of the public parameters needs to be computed. 
  
% By the definition $\mu$ can be computed efficiently, using a closed form expression, if available.
% Therefore, $\mathsf{Gen}$($1^\lambda,N$) runs efficiently. We do not 

\subsection{The $\solve(\mathbf{pp},x)$ Algorithm}
The prover,
\begin{enumerate}

 \item sample a random oracle $\H(\cdot)\stackrel{def}{=}\mathsf{H}(x \| \cdot)$ using $x$.
 \item compute $s_0= \H(0^n)$.
 \item initialize $\phi=s_0$ and $\rho=1_{\mathbb{U}}$.
 \item repeat for $1 \le i < N$ to label the graph $G_n$,
\begin{enumerate}
 \item $s_{i}=\H(i\| s_{k_1}\| \ldots \|s_{k_j})$ where $\parent(i)=\{k_1, \ldots, k_j\}$.
 \item $\rho =\rho \otimes s_i$.
 \item $\phi = f(\phi,s_i)$.
\end{enumerate}

\end{enumerate}

Announce the triple $(x,N,\phi)$ and stores the $\rho=(s_1\otimes \ldots \otimes s_N)$ locally. 
Here $\phi =f(\ldots f(f(s_0,s_1),s_2),\ldots, s_N)=f(s_0,\rho)$ serves as the commitment to the labels 
of $G_n$.
% We bind the element $s_{N}$ with the product $\phi$ in order to have the starting point for the
% \textsf{Inc\textsuperscript{G}} algorithm. Otherwise, only $\phi$ would have been sufficient to commit. 
% We leave the computation of $\phi$ to the prover. (S)he may use either sides of the identity~\ref{identity},
% \begin{figure}[h]
% \centering
% \resizebox{.8\textwidth}{!}{\input{../Figure/scheme.pstex_t}}
% \caption{A schematic view of the $i$$\PoSW$ scheme for $g(\lambda)=n=4$ and $N=10$. 
% For $0 \le i \le 13$ $s_{i+1}=\textsf{G}(i||s_i)$,  $s_0=\textsf{G}(14||s_{14})=h(x)$.
% The prover computes $\phi= \bigoplus\mathcal{S}_{5}= s_0\oplus s_1\oplus\ldots\oplus s_5$. 
% For $0 \le j \le t-1$ and $1 \le \gamma_j \le 4$, 
% the verifier computes $\sigma$ and $\tau$ with $s_{\gamma_j}^{-1}$ and $s_{5+\gamma_j}^{-1}$, respectively.}
% \label{fig: $\PoSW$}
% \end{figure}

\subsection{The $\open(\mathbf{pp},x,N,\phi,\rho,\gamma)$ Algorithm}
In this phase $\mathcal{V}$ samples a set of challenge leaves 
$\gamma=\{\gamma_1, \gamma_2,\ldots,\gamma_{t}\}$ where each $\gamma_i\xleftarrow{\$}\Z_N$. 
For each $\gamma_i$, $\mathcal{P}$ needs to construct the proof 
$\pi=\{\sigma_i,\tau_i\}_{0 \le i \le t-1}$ by the following,

\begin{enumerate}
\item initialize an empty set $\pi=\{\emptyset\}$.
\item repeat for $1 \le i \le t$:
\begin{enumerate}
 \item assign $\sigma_i = \parent(\gamma_i)$.
 \item compute $\tau_i\leftarrow f(s_0, (\rho \otimes s_{\gamma_i}^{-1}))$.
 \item assign $\pi_i =\{\sigma_i,\tau_i\}$.
\end{enumerate}
\end{enumerate}

% $\mathcal{P}$ needs to construct the proof $\pi=\{s_{\gamma_i-1}, s_{N/2+\gamma_i-1}\}_{0 \le i \le t-1}$.
% We fix one of the challenges as $0$ to bind $x$ with $\phi$. This trial is not considered in the calculation of the 
% soundness in Theorem~\ref{thm: soundness}. 
$\mathcal{P}$ can not compute $\tau_i\leftarrow f(\phi, s_{\gamma_i}^{-1})$ because 
the inverse $s_{\gamma_i}^{-1}\notin\mathbb{R}$ as $\langle \mathbb{R}, \otimes \rangle$ is a monoid. 
% Also without the knowledge of $\rho$ it takes $2^{\Omega(\lambda)}$-time to find a $\tau_i$ 
% as it takes $2^{\Omega(\lambda)}$-time to invert $f$ on either of its arguments.
% Also this is why $\mathcal{P}$ needs to store $\rho$ as the local information $\phi_\mathcal{P}$.
% As mentioned in Sect.~\ref{preliminaries}, in the non-interactive version of the protocol $\mathcal{P}$ 
% will compute $\gamma=\{\mathsf{H}'(\phi\|1), \mathsf{H}'(\phi\|2),\ldots,\mathsf{H}'(\phi\|t)\}$.
% \begin{description}
%  \item [Remark 1]
% 
% We leave it to $\mathcal{P}$ if (s)he computes the sequence $\mathcal{S}_N$ during 
% \textsf{Solve\textsuperscript{G}} instead of $\sigma$ and stores them in a local storage.
% Computing $\sigma$ first and then the sequence $\mathcal{S}_N\setminus\sigma$ turns out 
% to be the optimal choice for a prover.
% 
% 
% \item [Remark 2] 
% 
% 
% \end{description}

\subsection{The $\verify(\mathbf{pp},x,N,\gamma,\pi,\phi)$ Algorithm}

The $\vrf$ runs these steps.
\begin{enumerate}
\item samples a random oracle $\H(\cdot)\stackrel{def}{=}\mathsf{H}(x \| \cdot)$ using $x$.
 \item computes $s_0= \H(0^n)$.
\item repeat for $1 \le i < t$:
\begin{enumerate}
 \item compute $s_{\gamma_i}=\H(i\| s_{k_1}\| \ldots \|s_{k_j})$ where $\parent(\gamma_i)=\sigma_i=\{k_1, \ldots, k_j\}$.
 \item assign a flag $r_i$ if and only if $\phi\stackrel{?}{=}f(\tau_i,s_{\gamma_i})$.
\end{enumerate}
 \item Accept if $(r_0 \land r_1 \land \ldots \land r_{t-1})\stackrel{?}{=}\top$, rejects otherwise.
  \end{enumerate}

 The security and the efficiency analysis are exactly same as in the Sect.~\ref{security} and Sect.~\ref{efficiency}.
 It just needs to replace $\times$ with $\otimes$ and taking $f$ in general. 
 Still we reiterate exactly the same proofs for the sake of completeness.

\begin{theorem}
 The generalized {\normalfont $\PoSW$} is correct.
\end{theorem}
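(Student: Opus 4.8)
The plan is to mimic the proof of Theorem~\ref{thm: correctness} almost verbatim, replacing the concrete pair $(\times,\; g^a \bmod \Delta)$ by the abstract pair $(\otimes, f)$ and invoking only the axioms listed in Section~\ref{Gen}. First I would observe that $s_0$ is well defined: since $\mathsf{H}$ (hence $\H$) always returns a fixed value in the sense of Definition~\ref{RO}, the label $s_0 = \H(0^n)$ is uniquely determined by the statement $x$, and an honest run of $\solve$ then labels every node deterministically via $s_i = \H(i\|s_{k_1}\|\cdots\|s_{k_j})$, so the honest prover and the verifier compute the same labels, the same $\rho = s_1\otimes\cdots\otimes s_N$, and the same $\phi = f(s_0,\rho)$.

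Next, fix a challenge leaf $\gamma_i$. In $\open$ the honest prover returns $\sigma_i = \parent(\gamma_i)$, which by the construction of $G_n$ are precisely the labels the verifier needs in order to recompute $s_{\gamma_i} = \H(\gamma_i\|s_{k_1}\|\cdots\|s_{k_j})$, so $\vrf$'s recomputed label agrees with the honest one; it also returns $\tau_i = f(s_0,\; \rho\otimes s_{\gamma_i}^{-1})$. The core of the argument is then the identity
\begin{align*}
f(\tau_i, s_{\gamma_i}) &= f\bigl(f(s_0, \rho\otimes s_{\gamma_i}^{-1}), s_{\gamma_i}\bigr)\\
&= f\bigl(s_0, (\rho\otimes s_{\gamma_i}^{-1})\otimes s_{\gamma_i}\bigr)\\
&= f(s_0, \rho) = \phi,
\end{align*}
where the second equality is property~(4a) of $f$ in Section~\ref{Gen} and the third uses associativity, commutativity, and the inverse law of the group $\langle\mathbb{U},\otimes\rangle$. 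Hence the flag $r_i$ is set to $\top$ for every $i$, so $r_0\wedge\cdots\wedge r_{t-1}=\top$ and $\vrf$ accepts; since nothing above is probabilistic, this holds with probability $1$, which is exactly what Definition~\ref{def: Correctness} demands.

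The only delicate point — and the sole place where the third property of $\otimes$ is used — is domain-checking, because $f$ is defined only on $\mathbb{R}\times\mathbb{R}$. I must check that every argument stays inside $\mathbb{R}$: since $\gamma_i$ is a leaf of $G_n$, its label $s_{\gamma_i}$ is one of $\{s_1,\dots,s_N\}$, so property~(3c) of $\otimes$ guarantees $\rho\otimes s_{\gamma_i}^{-1} = (s_1\otimes\cdots\otimes s_N)\otimes s_{\gamma_i}^{-1}\in\mathbb{R}$, making $\tau_i$ well defined; and each label lies in $\mathbb{S}\subseteq\mathbb{R}$, so $f(\tau_i, s_{\gamma_i})$ is well defined and property~(4a) legitimately applies. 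No further obstacle arises: correctness here is a purely syntactic consequence of the $\gen$ axioms, with no probabilistic slack. (Note that Lemma~\ref{factorization} shows the converse may fail in the concrete instantiation, i.e.\ a dishonest $\phi'$ can also pass verification, but that bears on soundness, not correctness.)
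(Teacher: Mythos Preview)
Your proof is correct and follows essentially the same route as the paper's own argument: both note that $s_0$ (and hence all labels and $\phi$) is uniquely determined by $x$, then verify the identity $f(\tau_i,s_{\gamma_i})=f(f(s_0,\rho\otimes s_{\gamma_i}^{-1}),s_{\gamma_i})=f(s_0,\rho)=\phi$ via property~(4a) and the group law in $\langle\mathbb{U},\otimes\rangle$. Your additional domain-checking step (invoking property~(3c) to ensure $\rho\otimes s_{\gamma_i}^{-1}\in\mathbb{R}$ so that $\tau_i$ is well defined) is a welcome bit of rigor that the paper's proof glosses over.
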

\begin{proof}
 In the algorithm $\solve$, since $\mathsf{H}$ always outputs a fixed element, 
 $s_0$ is uniquely determined by the input $x$. Moreover in $\verify$, 
 for each $1 \le i < t$,
 \begin{align*}
 f(\tau_i, s_{\gamma_i}^{-1}) 
 &= f(f(s_0, (\rho\otimes s_{\gamma_i}^{-1})),s_{\gamma_i})\\ 
 &=f(s_0,(\rho\otimes s_{\gamma_i}^{-1}\otimes s_{\gamma_i}))\\
                                           &= f(s_0,\rho)\\
                                           &=\phi. 
 \end{align*}
 So, the correct enumeration of the sequence $\sigma$ and evaluation $\phi$ assigns $f_i=\top$.
 Then for each $i, f_i$ must be $\top$ which results into $f=\top$ in 
 $\verify$. So $\mathcal{V}$ has to accept it. 
 \qed
\end{proof}

\begin{theorem}\label{thm: soundness}
 Suppose $\adv$ be an adversary who breaks the soundness of this generalized $\PoSW$
 with probability $p_{win}$. Then there exists \emph{any} one of these two attackers,
 \begin{enumerate}
  \item $\adv^{-1}$ inverting $f$ on its first argument $g$, 
  \item $\adv_{\CP}$ breaking the soundness of the $\CP$ construction,
 \end{enumerate}
 with the same probability $p_{win}$.
\end{theorem}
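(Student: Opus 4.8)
The plan is to mimic, almost verbatim, the reduction already carried out in Theorem~\ref{thm: soundness} for the concrete instantiation, but replacing the specific root-finding game $\rfg_\mathsf{H}$ with the abstract statement that $\adv^{-1}$ inverts $f$ on its first argument. Concretely, I assume $\adv$ breaks the soundness of the generalized $\PoSW$ with probability $p_{win}$, meaning $\adv$ produces a triple $(x,N,\phi)$ with $\phi\neq\solve(\mathbf{pp},x)$ together with a proof $\pi=\{\sigma_i,\tau_i\}$ that makes $\verify$ accept, i.e.\ $\phi=f(\tau_i,s_{\gamma_i})$ for every challenge $\gamma_i$. Since $\phi$ is not the honestly computed commitment, on at least one challenged leaf either the supplied labels $\sigma_i=\parent(\gamma_i)$ are inconsistent with the $\CP$-labeling of $G_n$, or they are consistent; I would split into these two cases and build one attacker in each.

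In the first case the labels along some challenge path disagree with the honest $\CP$-labeling, yet $\verify$ accepts; this is exactly a break of the soundness of the underlying $\CP$-construction, so I set $\adv_\CP$ to run $\adv$, forward the commitment $(x,\phi,N)$ and challenge $\gamma_i$, and output the inconsistent parent labels $\adv$ returns. This succeeds with the same probability $p_{win}$ (up to the negligible $\RO$-collision slack, which I would explicitly set aside). In the second case all challenged labels are $\CP$-consistent, so $s_{\gamma_i}$ is the genuine label; then $f(\tau_i,s_{\gamma_i})=\phi$ with $\phi\neq f(s_0,\rho)$, and the associativity property $f(f(g,a),b)=f(g,a\otimes b)$ lets me rewrite $\phi=f(\tau_i, s_{\gamma_i})$ and exhibit $\tau_i$ as a preimage of $\phi$ under $f(\cdot,s_{\gamma_i}^{-1})$ — that is, $\adv$ has effectively found the first argument $g$ with $f(g, s_{\gamma_i}^{-1})=\phi$ despite the second argument being known, contradicting property~4(b) of $f$ in Def.~\ref{Gen}. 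So I define $\adv^{-1}$ to pick an arbitrary $w\in\mathbb{T}$, encode a leaf-query string $u=i\|s_{k_1}\|\cdots\|s_{k_j}$ so that $s_{\gamma_i}=\H(u)$ is the ``given'' second argument, challenge $\adv$ on the commitment $(x,w,N)$ at leaf $i$, and output the $\tau_i$ that $\adv$ returns as the inverse.

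The main obstacle is making the case distinction airtight: I must argue that $\phi\neq\solve(\mathbf{pp},x)$ forces, on \emph{every} accepting run, at least one of the two situations above to hold on some challenged leaf, and that the probability mass $p_{win}$ therefore transfers in full to (at least) one of the two attackers rather than splitting between them. The clean way is the pigeonhole phrasing already used in the concrete proof (``there exists \emph{any} one of these two attackers with the same probability $p_{win}$''): if $\adv_\CP$ does not already succeed with probability $p_{win}$, then conditioned on $\adv$'s success the labels are consistent and the $\adv^{-1}$ branch fires, so $\adv^{-1}$ inherits the full $p_{win}$. A secondary subtlety is that the challenge $\gamma_i$ in the reduction is determined by a random-oracle query on the candidate label, so one wants the distribution of $\ell$ (resp.\ $s_{\gamma_i}$) seen by $\adv^{-1}$ to match what the hardness assumption for $f$ expects; since $\H$ outputs uniformly, this matches, and I would only remark on it briefly. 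I would close by noting, as in the concrete case, that the adaptive/inversion assumption on $f$ caps the first attacker's advantage at $\negl(\lambda)$ while $\adv_\CP$ is bounded by $(1-\alpha)^t$, yielding the soundness bound for the generalized $\PoSW$.

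\begin{proof}
We construct the two attackers separately; by a pigeonhole argument at least one of them inherits the full advantage $p_{win}$.

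Suppose $\adv$ outputs a commitment $(x,N,\phi)$ with $\phi\neq\solve(\mathbf{pp},x)$ and a proof $\pi=\{\sigma_i,\tau_i\}_{1\le i\le t}$ for which $\verify$ accepts, i.e.\ for every $i$ we have $s_{\gamma_i}=\H(i\|s_{k_1}\|\cdots\|s_{k_j})$ with $\parent(\gamma_i)=\sigma_i=\{k_1,\ldots,k_j\}$ and $\phi=f(\tau_i,s_{\gamma_i})$. Consider any accepting run. If the labels $\sigma_i$ supplied on some challenged leaf $\gamma_i$ are \emph{inconsistent} with the honest $\CP$-labeling of $G_n$, then $\adv$ has, on that challenge, broken the soundness of the $\CP$ construction directly. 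If instead on every challenged leaf the supplied labels are consistent with the $\CP$-labeling, then each $s_{\gamma_i}$ is the genuine label, so $f(\tau_i,s_{\gamma_i})=\phi$ with $\phi\neq f(s_0,\rho)$; using associativity $f(f(g,a),b)=f(g,a\otimes b)$ this means $\tau_i$ is a preimage of $\phi$ under $f(\,\cdot\,,s_{\gamma_i})$ even though the second argument $s_{\gamma_i}$ is public, contradicting property~4(b) of Def.~\ref{Gen}.

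{\normalfont\textbf{Construction of $\adv_{\CP}$.}} On a random challenge $\gamma_i$, $\adv_{\CP}$ calls $\adv$ on the commitment $(x,\phi,N)$ against $\gamma_i$. Since $\adv$ breaks soundness with probability $p_{win}$, with that probability $\adv$ returns labels $s_{k_1},\ldots,s_{k_j}$ for $\parent(\gamma_i)=\{k_1,\ldots,k_j\}$ together with a $\tau_i$ making $\verify$ accept. By the first property of the $\CP$-graph, $\parent(\gamma_i)$ is necessary and sufficient to reconstruct the root label; $\adv_{\CP}$ outputs these labels and thereby breaks the soundness of the $\CP$ construction with probability $p_{win}$ (we ignore the negligible slack due to $\RO$-collisions).

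{\normalfont\textbf{Construction of $\adv^{-1}$.}} Suppose $\adv_{\CP}$ does \emph{not} achieve advantage $p_{win}$; then, conditioned on $\adv$'s success, the challenged labels are $\CP$-consistent. $\adv^{-1}$ is given $f$ and picks an arbitrary $w\in\mathbb{T}$. It fixes a leaf $i$ of $G_n$ with $\parent(i)=\{k_1,\ldots,k_j\}$ and sets the query string $u=i\|s_{k_1}\|\cdots\|s_{k_j}$, so that $s_{\gamma_i}=\H(u)$ is the given second argument; since $\H$ outputs uniformly, the distribution of $s_{\gamma_i}$ matches the one expected by the inversion assumption on $f$. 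Now $\adv^{-1}$ challenges $\adv$ on the commitment $(x,w,N)$ at leaf $i$. With probability $p_{win}$, $\adv$ returns $\tau_i$ such that $w=f(\tau_i,s_{\gamma_i})$, i.e.\ $\tau_i$ inverts $f$ on its first argument for the publicly known second argument $s_{\gamma_i}$; $\adv^{-1}$ outputs $\tau_i$ and wins with probability $p_{win}$.

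Thus at least one of $\adv^{-1}$ and $\adv_{\CP}$ has advantage $p_{win}$, as claimed.
\qed
\end{proof}
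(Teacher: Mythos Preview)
Your proposal is correct and follows essentially the same reduction as the paper's own proof: both constructions---$\adv^{-1}$ picking an arbitrary target $w\in\mathbb{T}$, fixing a leaf query $u=i\|s_{k_1}\|\cdots\|s_{k_j}$, and extracting $\tau$ from $\adv$; and $\adv_{\CP}$ forwarding the challenge and outputting the parent labels---match the paper almost verbatim (the paper merely presents them in the opposite order). Your added case analysis on label consistency and the explicit pigeonhole remark are not in the paper's terser proof but only sharpen the presentation without changing the argument.
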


\begin{proof}
We give the adversaries one by one.
 \begin{description}
  \item [{\normalfont Construction of $\adv^{-1}$:}]
  Suppose $\adv_{\CP}$ does not exist.
 $\adv^{-1}$ chooses an arbitrary $\phi \in \mathbb{T}$ but 
 $u=i\| s_{k_1}\| \ldots \|s_{k_j}$ where $i$ is a leaf in $G_n$ and $\parent(i)=\{k_1, \ldots, k_j\}$.
 Then $\adv^{-1}$ challenges $\adv$ on the leaf $i$ against the commitment $(x,w,N)$. 
 $\adv$ breaks the soundness with the probability $p_{win}$. 
 So $\adv$ computes $s_i=\H(i\| s_{k_1}\| \ldots \|s_{k_j})$ and 
 finds a $\tau$ such that $f(\tau,s_i)=\phi$.  $\adv^{-1}$ returns $\tau$ to invert
 $f$ on its first argument with probability $p_{win}$.

\item [{\normalfont Construction of $\adv_{\CP}$:}]
Suppose $\adv^{-1}$ does not exist.
On a random challenge $\gamma_i$, $\adv_{\CP}$ breaks the soundness of the $\CP$-construction 
if $\adv_{\CP}$ succeeds to find the labels of the nodes 
that are required to construct the root label $\phi$ of $G_n$. 
By the second property of the $\CP$-graph (See Sect.~\ref{CP}), $\parent(\gamma_i)$ is necessary and sufficient 
to construct the root label. So, $\adv_{\CP}$ calls $\adv$ on the commitment $(x,\phi,N)$ against the $\gamma_i$.  
$\adv$ finds $s_{\gamma_i}=\H(\gamma_i\| s_{k_1}\| \ldots \|s_{k_j})$ 
such that $\parent(\gamma_i)=\{k_1, \ldots, k_j\}$. 
These labels $s_{k_1}, \ldots ,s_{k_j}$ are consistent with probability   
$p_{win}$. So,  $\adv_{\CP}$ returns $s_{\gamma_i}$ and breaks the soundness of $\CP$ construction with the 
probability $p_{win}$.
\end{description} 
\qed
\end{proof}

\begin{theorem}
 Suppose $\adv$ be an adversary who breaks the sequentiality of this generalized $\PoSW$
 with probability $p_{win}$. Then there exists an attacker 
 $\adv'_{\CP}$ breaking the sequentiality of the $\CP$ construction,
%  \end{enumerate}
 with the same probability $p_{win}$.
\end{theorem}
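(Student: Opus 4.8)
The plan is to reuse, essentially verbatim, the reduction behind Theorem~\ref{thm: sequentiality}, because that argument never exploited the concrete shape $f(g,a)=g^{a}\bmod\Delta$; it only used that $\solve$ labels $G_n$ by the same $\H$-recursion as $\CP$. So the first step is to record the structural observation: in the generalized $\solve$, the labels are produced exactly as in the $\CP$ protocol, $s_i=\H(i\|s_{k_1}\|\ldots\|s_{k_j})$ with $\parent(i)=\{k_1,\ldots,k_j\}$, and the only additions are the running updates $\rho=\rho\otimes s_i$ and $\phi=f(\phi,s_i)$. These updates issue \emph{no} queries to $\H$ and, since $\otimes$ and $f$ are efficient, do not increase the number of $\H$-rounds --- the quantity in which sequentiality is measured (cf. the ``Source of Sequentiality'' remark). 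Consequently, ``a consistent labeling of $G_n$'' is the same object in both schemes, and I can hand any such labeling extracted from a $\PoSW$-attacker straight to a $\CP$-attacker.

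Next I would build $\adv'_{\CP}$ as a thin wrapper around $\adv$, mimicking the earlier proof. Given a uniformly random node $v\in G_n$ and statement $x$, $\adv'_{\CP}$ picks some leaf $u$ of $G_n$ with $v\in\parent(u)$ and runs $\adv$ on $x$ with challenge leaf $u$, after relaying the shared $\pp$ and the preprocessed $\st\leftarrow\adv_0(1^\lambda,N,\pp)$. Since $\adv$ breaks the sequentiality of the generalized $\PoSW$ with probability $p_{win}$ in parallel time $\delta<N-o(N)$, it outputs labels along the challenge path consistent with the $\H$-labeling; in particular it yields $s_v$ with $s_u=\H(u\|\ldots\|s_v\|\ldots)$. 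Returning this $s_v$, $\adv'_{\CP}$ answers an arbitrary random node of $G_n$ consistently, i.e. it breaks the sequentiality of the $\CP$ construction with probability $p_{win}$. Since $\CP$ is $(N-o(N))$-sequential (its labeling requires an $\H$-sequence of length $(1-\alpha)N$ along a path, by Lemma~\ref{thm:Sequential}), this forces $p_{win}=\negl(\lambda)$, which is the displayed bound $\Pr[\phi=\solve(\pp,x)\mid\cdots]=\negl(\lambda)$.

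The step I expect to be the main obstacle is the time-budget bookkeeping: one has to check that the wrapper $\adv'_{\CP}$ genuinely stays within parallel time $\delta<N-o(N)$ and fits the $(\adv_0,\adv_1)$ preprocessing template, i.e. that translating a queried node into a leaf challenge, invoking $\adv_1$ once, and reading off a single label from its output never inflates the $\H$-round count, and that any $\otimes$/$f$-computation an honest checker would interleave is $o(N)$ in parallel depth. This is precisely where the ``sequentiality does not borrow from $f$'' observation is load-bearing, so I would make that estimate explicit here rather than leave it implicit as in the $f(g,a)=g^{a}\bmod\Delta$ instantiation, and I would also note the usual negligible slack from $\RO$-collisions that is being ignored.
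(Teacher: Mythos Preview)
Your proposal is correct and follows essentially the same reduction as the paper: wrap $\adv$ by choosing, for a target node $v$, a leaf $u$ with $v\in\parent(u)$, run $\adv$ on challenge $u$, and read off the label $s_v$ from its output. The paper's proof is terser and omits the structural observation about $\otimes,f$ issuing no $\H$-queries and the time-budget bookkeeping you spell out, but the argument is identical.
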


\begin{proof}
 $\adv'_{\CP}$ wants to label the $G_n$ consistently in time $\delta(N) < N$. 
 Here ``consistently" means given any random node $v \in G_n$,  $\adv'_{\CP}$ 
 must be able to figure out the label $s_v$ that are consistent with the labeling of $G_n$. 
 So, given the node $v$ on the input $x$, $\adv'_{\CP}$ calls
 the $\adv$ on the input $x$ and a random challenge $u$ such that $v \in \parent(u)$. 
 $\adv$ finds the label $s_v$ such that $s_u=\H(u \| \ldots \| s_v \| \ldots )$, 
 as it breaks the sequentiality of the proposed $\PoSW$. Thus $\adv'_{\CP}$ returns $s_v$ 
 on any random $v\in G_n$ and breaks the sequentiality of the $\CP$ construction
 with the same probability $p_{win}$.
 \qed
\end{proof}

\section{Conclusion and Open Problem}\label{conclusion}
This paper presents a proof 
of sequential work that queries the random oracle only once while verifying. 
Our $\PoSW$ is based on the one in~\cite{Cohen2018Simple}, however uses two 
additional primitives i.e., the operation $\otimes$ and the map $f$ in its design.
We have been able to show that even a non-parallel verifier needs only a single oracle query to verify. 
So the effort for verification reduces 
to a single query from logarithmically proportional queries to $N$ (time) as in the existing $\PoSW$s. 
The key idea is to replace the Merkle root based commitment with the operation $\otimes$ and the map $f$. 
% In order to produce short proof we use an efficient binary map too.

Our $\PoSW$ is proven to be correct, sound and sequential.
Finally we give a proper guideline to choose $\otimes$ and $f$.
However, it turns out to be a nice open question that how far we can minimize the computation time of the map $f$ 
maintaining the guidelines. In particular only the soundness depend on the choice of $f$. 
So we would always like to have $f$ that is as fast as possible. Is there any lower bound of this computation time beyond 
which the safeguard will be violated or we are free to have any $f$. If yes, then what are the other options?

\bibliographystyle{splncs04}
\bibliography{ref}

\end{document}